\newtheorem{theorem}{Theorem}
\theoremstyle{definition}
\newtheorem{lemma}{Lemma}
\theoremstyle{definition}
\theoremstyle{definition}
\theoremstyle{definition}
\theoremstyle{definition}
\theoremstyle{definition}
\crefname{section}{\S}{\S\S}
\Crefname{section}{\S}{\S\S}
\Crefname{line}{Line}{line}
\crefname{line}{Line}{line}
\Crefname{assumption}{Assumption}{assumption}
\crefname{table}{Table}{Tables}
\newcommand{\name}{\textsc{strong}\xspace}
\newcommand{\cert}{\textsc{bucket}\xspace}
\definecolor{lightgray}{gray}{0.90}
\renewenvironment{leftbar}[1][\hsize]
{%
\MakeFramed{\hsize#1\advance\hsize-\width\FrameRestore}%
}
{\endMakeFramed}
\algnewcommand{\BlueComment}[1]{\textcolor{blue}{\hfill\(\triangleright\) #1}}
\algnewcommand{\LineComment}[1]{\State \(\triangleright\) #1}
\crefname{lstlisting}{listing}{listings}
\Crefname{lstlisting}{Listing}{Listings}
\crefname{code}{line}{lines}
\Crefname{code}{Line}{Lines}
\definecolor{mygreen}{rgb}{0.254,0.572,0.294}
\definecolor{mygray}{rgb}{0.5,0.5,0.5}
\definecolor{myorange}{rgb}{1,0.35,0}
\definecolor{mymauve}{rgb}{0.58,0,0.82}
\definecolor{myblue}{rgb}{0.2,0.4,0.6}
\definecolor{rakos4orange}{RGB}{255,165,0}
\definecolor{rakos4blue}{RGB}{14,48,173}
\definecolor{rakos4lblue}{RGB}{92,172,238}
\definecolor{rakos4dgray}{RGB}{77,77,77}
\definecolor{plainred}{RGB}{211,63,63}
\definecolor{plainorange}{RGB}{221,105,41}
\lstdefinelanguage{Golang}%
  {morekeywords=[1]{package,import,struct,defer,panic,%
     recover,select,var,const,iota,},%
   morekeywords=[2]{string,uint,uint8,uint16,uint32,uint64,int,int8,int16,%
     int32,int64,bool,float32,float64,complex64,complex128,byte,rune,uintptr,%
     error,interface,message,node},%
   morekeywords=[3]{map,slice,make,new,nil,len,cap,copy,close,true,false,%
     delete,append,real,imag,complex,chan,},%
   morekeywords=[4]{break,continue,goto,switch,case,fallthrough,%
    default,},%
   morekeywords=[5]{Println,Printf,Error,Send},%
   sensitive=true,%
   morecomment=[l]{//},%
   morecomment=[s]{/*}{*/},%
   morestring=[b]",%
   morestring=[s]{`}{`},%
   }
\small\color{mygray}\textnormal,
\scriptsize\color{mygray}, 
\newcommand{\remove}[1]{}
\newif\ifcomments
\newcommand{\ms}[1]{%
	    \relax\ifmmode
	        \mathord{\mathcode`\-="702D\it #1\mathcode`\-="2200}%
	    \else
	        {\it #1}%
	    \fi
}
\newcommand{\tup}[1]{%
	    \relax\ifmmode
	      \langle #1 \rangle%
	    \else
	        $\langle$ #1 $\rangle$%
	    \fi
}
\date{}
\begin{document}

\algsetblockdefx[LocalState]{LocalState}{EndLocalState}{}{}{\textbf{Local state $\alpha_p$:}}{}
\algsetblockdefx[Round]{Round}{EndRound}{}{}[1]{\textbf{Round:} $#1$}{}
\algsetblockdefx[SendStep]{SendStep}{EndSendStep}{}{}{$S_p^r$:}{}
\algsetblockdefx[TransitionStep]{TransitionStep}{EndTransitionStep}{}{}{$T_p^r$:}{}
\algsetblockdefx[Upon]{Upon}{EndUpon}{}{}[2]{\textbf{upon} $\mathit{deliver}(#1)$ from $#2$ \textbf{do}}{}
\algsetblockdefx[UponEvent]{UponEvent}{EndUponEvent}{}{}[1]{\textbf{upon event} $#1$ \textbf{do}}{}
\algsetblockdefx[UponReceipt]{UponReceipt}{EndUponReceipt}{}{}[2]{\textbf{upon receipt of} $#1$ from $#2$ \textbf{do}}{}

\author{Pierre Civit}
\affiliation{
\institution{Sorbonne University} 
\country{France}
}

\author{Seth Gilbert}
\affiliation{
\institution{NUS Singapore}
\country{Singapore}
}

\author{Rachid Guerraoui}
\affiliation{
\institution{École Polytechnique Fédérale de Lausanne (EPFL)}
\country{Switzerland}
}

\author{Jovan Komatovic}
\affiliation{
\institution{École Polytechnique Fédérale de Lausanne (EPFL)}
\country{Switzerland}
}

\author{Manuel Vidigueira}
\affiliation{
\institution{École Polytechnique Fédérale de Lausanne (EPFL)}
\country{Switzerland}
}

\title{\textbf{Strong Byzantine Agreement with Adaptive Word Complexity}
}

\maketitle

The strong Byzantine agreement (SBA) problem is defined among $n$ processes, out of which $t < n$ can be faulty and behave arbitrarily.
SBA allows correct (non-faulty) processes to agree on a common value.
Moreover, if all correct processes have proposed the same value, only that value can be agreed upon.
It has been known for a long time that any solution to the SBA problem incurs quadratic worst-case word complexity; additionally, the bound was known to be tight.
However, no existing protocol achieves \emph{adaptive} word complexity, where the number of exchanged words depends on the \emph{actual} number of faults, and not on the upper bound.
Therefore, it is still unknown whether SBA with adaptive word complexity exists.

This paper answers the question in the affirmative.
Namely, we introduce \name, a synchronous protocol that solves SBA among $n = \big(2 + \Omega(1)\big)t + 1$ processes and achieves adaptive word complexity.
We show that the fundamental challenge of adaptive SBA lies in efficiently solving \emph{certification}, the problem of obtaining a constant-sized, locally-verifiable proof that a value can safely be decided.


\section{Introduction}

Strong Byzantine agreement (SBA) is a core primitive of distributed computing.
It is indispensable to state machine replication (SMR)~\cite{CL02,Abd-El-MalekGGRW05,KotlaADCW07,Momose021}, blockchain systems~\cite{DBLP:conf/opodis/AbrahamMN0S17,buchman2016tendermint,correia2019byzantine,CGL18,DBLP:conf/sosp/GiladHMVZ17}, and various other distributed protocols~\cite{Ben-OrGW19,galil1987cryptographic,DBLP:journals/dc/GilbertLS10,DBLP:journals/tse/GuerraouiS01}.
In SBA, $n$ processes propose and agree on a value, while tolerating up to $t$ arbitrary failures.
If a process exhibits an arbitrary failure, the process is said to be faulty; otherwise, the process is said to be correct.
Formally, an SBA protocol satisfies the following guarantees:
\begin{compactitem}
    \item \emph{Termination:} All correct processes eventually decide.

    \item \emph{Agreement:} No two correct processes decide different values.

    \item \emph{Strong validity:} If all correct processes propose the same value $v$, no correct process decides a value $v' \neq v$.
\end{compactitem}
Due to its importance, SBA has been meticulously studied.
For instance, the celebrated Dolev-Reischuk bound~\cite{dolev1985bounds} shows that any SBA protocol incurs quadratic worst-case word complexity.
Importantly, the bound is proven to be tight: there exist SBA protocols with quadratic worst-case word complexity for both synchronous~\cite{Momose2021,Berman1992} and partially synchronous~\cite{civit2022byzantine,lewis2022quadratic} environments.




However, one can argue that, in practice, systems usually experience many fewer faults than their maximum tolerable threshold.
Therefore, even though it is necessary to tolerate a large portion of the system failing at any point in time, it seems wasteful to constantly pay a high price as such an event seldom occurs.
For example, an adaptive protocol boasting a word complexity of $O(n \cdot f)$, where $f \leq t$ is the actual number of faults, would have a \emph{de facto} linear factor improvement in word complexity over the best non-adaptive solutions when $f$ is small ($f \in O(1)$).
Thus, the question naturally follows: does an adaptive SBA protocol exist?




In this paper, we answer this question in the affirmative for synchronous environments.
Concretely, we present \name, the first synchronous SBA protocol with a word complexity of $O(n \cdot f)$.
Notably, \name is near-optimally resilient, operating among $n = (2 + c)t  + 1$ processes, for any fixed constant $c > 0$.


\paragraph{Technical overview}

Recently, Cohen et al. \cite{cohen2022make} have presented a Byzantine agreement protocol with external validity (EBA) achieving adaptive $O(n \cdot f)$ word complexity, where external validity refers to the satisfaction of a predefined locally-computable predicate $\mathsf{validate}(\cdot)$. In this paper, we extend this line of work by introducing a novel approach to achieve strong validity, which ensures that if all correct processes propose the same value $v$, only this value can be decided.

To address this, we build upon the concept of a Proof of Exclusivity (PoE), as introduced by Rambaud \emph{et al.}~\cite{rambaud2022linear}. A PoE protocol, such as the Big Buckets PoE \cite{rambaud2022linear}, allows a designated leader (prover) to obtain a valid value $v$ and a publicly verifiable constant-sized proof, denoted $\mathsf{PoE}(v)$, that no other value $v' \neq v$ could possibly be \emph{unanimous}, i.e., the input of all honest players. 

We propose a new protocol that leverages the PoE protocol of Rambaud \emph{et al.}~\cite{rambaud2022linear} and the EBA protocol of Cohen \emph{et al.}~\cite{cohen2022make} to provide a constant-sized proof that some value is safe to be decided. Specifically, our protocol functions in the following manner:
\begin{itemize}
    \item Suppose that each process obtains a PoE certificate $\Sigma$ that its proposal $v$ is safe such that $\mathsf{validate}(v, \Sigma) = \mathit{true}$ if and only if $\Sigma = \mathsf{PoE}(v)$.
    \item Proposing value-certificate pairs to the adaptive EBA protocol thus allows us to solve Synchronous Byzantine Agreement (SBA) with adaptive word complexity. Termination and agreement follow from the EBA protocol, whereas strong validity is ensured by the means of PoE-based certificates.
\end{itemize}

Hence, this observation reduces the adaptive SBA problem to \emph{certification}, a problem of efficiently obtaining constant-sized certificates vouching for safe values.

To solve the certification problem, we propose \cert, a protocol which exchanges $O(n \cdot f)$ words and produces certificates of $O(1)$ words assuming a threshold signature scheme, under the condition that $f \leq c \cdot t$.
Any execution of \cert unfolds in phases, where each phase has a unique leader. These phases are modeled closely on the PoE protocol outlined by Rambaud \emph{et al.}~\cite{rambaud2022linear}.
When a phase starts, all correct processes send their values to the leader.
For simplicity, let us assume that the leader receives (at least) $2t + 1$ messages.
(The case in which the leader does not receive $2t + 1$ messages is slightly more complex, and thus delegated to the later stages of the paper.)
There are two possible scenarios:
\begin{compactitem}
    \item The leader receives (at least) $t + 1$ identical values $v$.
    In this scenario, the leader forms a $(t + 1)$-combined threshold signature (containing $O(1)$ words) which proves that $v$ is a safe value.
    We say that the leader creates a \emph{positive certificate}, as the constructed certificate proves that a specific value (in this case, $v$) has indeed been proposed by a correct process, thus making the value safe.

    \item The leader does not receive $t + 1$ identical values.
    In this scenario, the leader cannot form a positive certificate.
    Instead, the leader needs to construct a \emph{negative certificate}, proving that not all correct processes have proposed the same value.
    Clearly, the leader could do so by including all the received messages in a certificate; however, this certificate would not be constant-sized.
    How can we construct a constant-sized negative certificate?

    When the leader receives (at least) $2t + 1$ proposals which do not allow a creation of a positive certificate, the leader separates the received proposals into \emph{groups} such that:
    \begin{compactitem}
        \item Each group contains at most $t$ proposals.

        \item For any two groups, at least $t + 1$ proposals fall collectively into those two groups.
    \end{compactitem}
    Importantly, the leader can always group the received proposals into $O(1)$ groups which abide by the aforementioned rules.
    
    In the next step, the leader disseminates the formed groups to all processes.
    When a correct process receives a group from the leader, it checks if its proposal belongs to the group.
    If it \emph{does not}, the process sends a partial signature of the group back to the leader.

    Finally, when the leader collects $t + 1$ partial signatures of a formed group, it combines the signatures into a $(t + 1)$-combined threshold signature which proves that at least one correct process has not proposed a value which belongs to this group.
    Thus, if the leader obtains such a threshold signature for \emph{every} formed group, the leader obtains a constant-sized (as there are $O(1)$ groups) certificate which proves that not all correct processes have proposed the same value.
\end{compactitem}

In brief, we design \name as a sequential composition of (1) \cert, our adaptive certification protocol which constructs constant-sized certificates on top of PoE protocol introduced by Rambaud \emph{et al.}~\cite{rambaud2022linear}, and (2) the EBA protocol introduced by Cohen \emph{et al.}~\cite{cohen2022make} which achieves adaptive word complexity.
The composition is carefully crafted to allow the production of a PoE certificate even when $f > c \cdot t$, and to avoid an $O(n^2)$ communication cost when $f=o(n)$.

\paragraph{Roadmap.}
We overview the related work in \Cref{section:related_work}.
In \Cref{section:model}, we define the system model.
We present \name as a composition of adaptive EBA and \cert in \Cref{section:strong}, whereas we devote \Cref{section:bucket_solution} to introducing \cert.
Finally, we conclude the paper in \Cref{section:conclusion}.

\section{Related Work} \label{section:related_work}

\paragraph{Byzantine agreement.}
Byzantine agreement~\cite{LSP82} is one of the most studied problems of distributed computing.
It is known that deterministic solutions exist both in synchronous~\cite{Momose2021,abraham2017brief,Berman1992} and partially synchronous systems~\cite{YMR19,BKM19,lewis2022quadratic,civit2022byzantine,CL02}.
However, the seminal FLP impossibility result~\cite{DBLP:journals/jacm/FischerLP85} states that no deterministic protocol solves consensus in an asynchronous system even if only one process can fail (and it can fail merely by crashing).
In order to circumvent the FLP impossibility, one can rely on randomization: many asynchronous randomized agreement protocols have been devised~\cite{B83, B87,abraham2019asymptotically,canetti1993fast,cohen2020not}.
Another approach of circumventing the aforementioned impossibility consists of weakening the definition of the problem by requiring termination only in some specific cases (and not always): this approach is known as the \emph{condition-based} approach~\cite{mostefaoui2003conditions,mostefaoui2001hierarchy,mostefaoui2003using,zibin2003condition}, and it requires an agreement protocol to terminate only if the proposals of correct processes satisfy some predefined conditions.

\paragraph{Round complexity of synchronous agreement protocols.}
Dolev and Strong has proven that any deterministic solution needs to run for (at least) $t + 1$ rounds in the worst-case.\footnote{Recall that $t$ denotes the maximum number of faulty processes.}
This lower bound on round complexity has been proven to be tight~\cite{LSP82}.
Chandra and Toueg~\cite{DBLP:conf/wdag/ChandraT90} introduced the concept of \emph{early stopping} in agreement protocols: they solve reliable broadcast in $f + 2$ rounds with crash-stop failures, and $2f  + 3$ rounds with omission failures.\footnote{Recall that $f \leq t$ denotes the actual number of faulty processes.}
Keidar and Rajsbaum~\cite{DBLP:journals/ipl/KeidarR03} proved that any uniform agreement requires $f + 2$ rounds, for any $f \leq t - 1$; uniform agreement is defined in the context of non-Byzantine faults, and it ensures that faulty and correct processes do not disagree among themselves.
In~\cite{DBLP:conf/spaa/ParvedyR04}, protocols terminating in $\min(f + 2, t + 1)$ rounds are presented for uniform consensus with crash and omission failures.

\paragraph{Word complexity of Byzantine agreement.}
Dolev and Reischuk~\cite{dolev1985bounds} proved that any deterministic Byzantine agreement protocol has $\Omega(t^2)$ word complexity.
This bound is shown to be tight, both in synchronous~\cite{Berman1992,Momose2021} and partially synchronous~\cite{lewis2022quadratic,civit2022byzantine} environments.
Recently, Cohen \emph{et al.}~\cite{cohen2022make} considered \emph{adaptive} word complexity of synchronous Byzantine agreement with external validity: they presented a deterministic solution with $O(n \cdot f)$ word complexity.
Moreover, it was shown that Byzantine agreement can be solved with adaptive word complexity in partial synchrony~\cite{DBLP:journals/corr/fever,lumiere}.
This paper continues the same line of research by presenting the first synchronous SBA with adaptive word complexity.
\section{System Model} \label{section:model}

\paragraph{Processes.}
We consider a static set $\{P_1, P_2, ..., P_n\}$ of $n = 2t + 1 + \lceil c \cdot t \rceil$ processes, where $c \in (0, 1]$ is a fixed constant.
At most $t$ processes can be Byzantine: these processes can behave arbitrarily.
If a process is Byzantine, we say that the process is \emph{faulty}; otherwise, we say that the process is \emph{correct}.
We denote by $0 \leq f \leq t$ the \emph{actual} number of faulty processes.
Finally, we define the \emph{optimistic threshold} parameter $t_o = \lfloor c \cdot t \rfloor$; note that $t_o \in O(t)$, which implies $t_o \in O(n)$.

\paragraph{Communication.}
Processes communicate by exchanging messages over an authenticated point-to-point network.
The communication network is \emph{reliable:} if a correct process sends a message to a correct process, the message is eventually received.
Moreover, the network is \emph{synchronous}.
Specifically, there exists a known bound $\delta$ on message delays.
The existence of the known bound on message delays allows us to design protocols in a round-based paradigm: if a correct process sends a message to another correct process at the beginning of some round, the message is received by the end of the same round.

\paragraph{Values.}
We denote by $\mathsf{Value}$ the set of values processes can propose and decide.
For simplicity, we assume that (1) all values can be ordered, and (2) $v_{\mathit{min}}$ (resp., $v_{\mathit{max}}$) is the minimum (resp., maximum) value.
We emphasize that \name can trivially be adapted to preserve its correctness and complexity even without the aforementioned assumptions.

\paragraph{Cryptographic primitives.}
In this paper, we rely on a $(k, n)$-threshold signature scheme~\cite{Libert2016} with $k = t + 1$.
In a $(k, n)$-threshold signature scheme, each process holds a distinct private key; there exists a single public key.
Each process $P_i$ can use its private key to produce a partial signature of a message $m$ by invoking $\mathsf{ShareSign}_i^k(m)$.
A partial signature $\mathit{psignature}$ of a message $m$ produced by process $P_i$ can be verified by $\mathsf{ShareVerify}_i^k(m, \mathit{psignature})$.
Lastly, a set $S = \{\mathit{psignature}_i\}$ of partial signatures, where $|S| = k$ and, for each $\mathit{psignature}_i \in S$, $\mathit{psignature}_i = \mathsf{ShareSign}_i^k(m)$, can be combined into a \emph{single} (threshold) signature by invoking $\mathsf{Combine}^k(S)$; a threshold signature $\mathit{tsignature}$ of message $m$ can be verified with $\mathsf{CombinedVerify}^k(m, \mathit{tsignature})$.
Where appropriate, invocations of $\mathsf{ShareVerify}^k(\cdot)$ and $\mathsf{CombinedVerify}^k(\cdot)$ are implicit in our descriptions of protocols.
We denote by $\mathsf{P\_Signature}$ and $\mathsf{T\_Signature}$ a partial signature and a (combined) threshold signature, respectively. 
Lastly, we underline that the EBA protocol introduced in~\cite{cohen2022make} (which we use in a ``closed-box'' manner) internally relies on a $(k, n)$-threshold signature scheme with $k = n - t_o$.

\paragraph{Word complexity.}
Let $\mathsf{Agreement}$ be a synchronous SBA protocol and let $\mathcal{E}(\mathsf{Agreement})$ denote the set of all possible executions.
Let $\alpha \in \mathcal{E}(\mathsf{Agreement})$ be an execution of $\mathsf{Agreement}$.
A \emph{word} contains a constant number of signatures and values.
Each message contains at least a single word.
We define the word complexity of $\alpha$ as the number of words sent in messages by all correct processes in $\alpha$.
The \emph{word complexity} of $\mathsf{Agreement}$ is then defined as
\begin{equation*}
\max_{\alpha \in \mathcal{E}(\mathsf{Agreement})}\bigg\{\text{word complexity of } \alpha\bigg\}.
\end{equation*}

\section{\name} \label{section:strong}

In this section, we present \name.
We start by defining the building blocks of \name (\Cref{subsection:building_blocks}).
Then, we give \name's pseudocode (\Cref{subsection:strong_pseudocode}), and prove its correctness and complexity (\Cref{subsection:strong_proof}).

\subsection{Building Blocks} \label{subsection:building_blocks}

\subsubsection{Certification} \label{subsubsection:certification_problem_definition}
The certification problem is a problem of obtaining a locally-verifiable, constant-sized cryptographic proof -- \emph{certificate} -- that some value is safe (according to some criterion).
The formal definition of the problem follows.

Let $\mathsf{Certificate}$ denote the set of certificates.
There exists a predefined function $\mathsf{validate}: \mathsf{Value} \times \mathsf{Certificate} \to \{ \mathit{true}, \mathit{false} \}$.
Intuitively, if $\mathsf{validate}(v, \Sigma) = \mathit{true}$, for some value $v$ and some certificate $\Sigma$, $\Sigma$ proves that $v$ is a safe value.
The certification problem exposes the following interface:
\begin{compactitem}
    \item \textbf{request} $\mathsf{start}(v \in \mathsf{Value})$: a process starts certification with a value $v$.

    \item \textbf{indication} $\mathsf{acquire}(v' \in \mathsf{Value}, \Sigma' \in \mathsf{Certificate})$: a process acquires a certificate $\Sigma'$ for value $v'$.

    \item \textbf{indication} $\mathsf{stop}$: a process stops certification.
\end{compactitem}
We say that a (correct or faulty) process \emph{obtains} a certificate $\Sigma$ if and only if the process stores $\Sigma$ in its local memory.
Importantly, it is assumed that all correct processes start certification simultaneously (i.e., all correct processes invoke $\mathsf{start}(\cdot)$ at the same round).

Certification ensures the following guarantees:
\begin{compactitem}
    \item \emph{Termination:} All correct processes stop certification simultaneously (i.e., at the same round).

    \item \emph{Liveness:} If $f \leq t_o$, every correct process acquires a certificate $\Sigma'$ for some value $v'$ ($\mathsf{validate}(v', \Sigma') = \mathit{true}$).

    \item \emph{Safety:} If all correct processes start certification with the same value $v$, then no (correct or faulty) process obtains a certificate for any value different from $v$.
\end{compactitem}
The termination property states that all correct processes simultaneously terminate certification.
Liveness guarantees that, if the number of faults does not exceed the optimistic threshold, all correct processes eventually acquire a certificate for some value.
Finally, safety ensures that, if all processes start with the same value, no process (even if faulty) obtains a certificate for any other value.

\paragraph{Implementation employed by \name.}
\name employs \cert as its certification protocol.
Importantly, \cert (see \Cref{section:bucket_solution}) solves the certification problem with $O(n \cdot f)$ exchanged words.

\subsubsection{Byzantine Agreement with External Validity}
The Byzantine agreement problem with external validity (EBA) is well-studied in the literature~\cite{cohen2022make,civit2022byzantine,YMR19,lewis2022quadratic}.
EBA is similar to the SBA problem, except that processes propose (and decide) value-certificate pair, and EBA guarantees external validity instead of strong validity.
Formally, EBA exposes the following interface:
\begin{compactitem}
    \item \textbf{request} $\mathsf{propose}(v \in \mathsf{Value}, \Sigma \in \mathsf{Certificate})$: a process proposes a value-certificate pair $(v, \Sigma)$.

    \item \textbf{indication} $\mathsf{decide}(v' \in \mathsf{Value}, \Sigma' \in \mathsf{Certificate})$: a process decides a value-certificate pair $(v', \Sigma')$.
\end{compactitem}
We assume that a correct process invokes $\mathsf{propose}(v, \Sigma)$, for some value-certificate pair $(v, \Sigma)$, only if $\mathsf{validate}(v, \Sigma) = \mathit{true}$.

EBA ensures the following properties:
\begin{compactitem}
    \item \emph{Termination:} Every correct process eventually decides.
    
    \item \emph{Agreement:} No two correct processes decide different value-certificate pairs.

    \item \emph{External validity:} If a correct process decides a value-certificate pair $(v, \Sigma)$, then $\mathsf{validate}(v, \Sigma) = \mathit{true}$. 
\end{compactitem}
To make the problem non-trivial, it is assumed that correct processes do not know which value-certificate pairs are valid.
(Otherwise, all correct processes could decide a predetermined valid pair without any communication.)

\paragraph{Implementation employed by \name.}
In \name, we rely on a specific implementation of EBA proposed by Cohen \emph{et al.}~\cite{cohen2022make}.
The aforementioned implementation solves the EBA problem with adaptive word complexity of $O(n \cdot f)$.

\subsection{Protocol} \label{subsection:strong_pseudocode}

The protocol of \name is given in \Cref{algorithm:strong_composition}.
We start the description of \name's pseudocode by defining what constitutes certificates in \name.

\paragraph{Certificates.}
\name uses three types of certificates:
\begin{compactenum}
    \item \emph{Bucket certificates:} As the name suggests, these are the certificates obtained by \cert.
    (For the full details on bucket certificates, see \Cref{section:bucket_solution}.)

    \item \emph{Specific certificates:} These certificates vouch that one (and only one) value is safe.
    Concretely, a certificate $\Sigma$ is said to be \emph{specific} for a value $v$ if and only if $\Sigma$ is a $(t + 1)$-combined threshold signature for $v$ ($\mathsf{CombinedVerify}^{t + 1}(v, \Sigma) = \mathit{true}$).

    \item \emph{General certificates:} These certificates vouch that any value is safe: a general certificate proves that not all correct processes have proposed the same value.
    Formally, a certificate $\Sigma$ is said to be \emph{general} if and only if $\Sigma$ is a $(t + 1)$-combined threshold signature for ``any value'' ($\mathsf{CombinedVerify}^{t + 1}(\text{``any value''}, \Sigma) = \mathit{true}$).
\end{compactenum}
Crucially, any certificate (irrespectively of its type) contains only $O(1)$ words.

\paragraph{Protocol description.}
We explain the protocol from the perspective of a correct process $P_i$.
An execution of \name can be divided into two phases:
\begin{compactenum}
    \item Certification phase:
    In this phase, $P_i$ executes \cert.
    Namely, $P_i$ starts \cert with its proposal (line~\ref{line:start_bucket}), and then waits for \cert to stop (line~\ref{line:stop_bucket}).
    If $P_i$ acquires a bucket certificate $\Sigma$ for some value $v$, $P_i$ stores $v$ and $\Sigma$ in its local variables (line~\ref{line:fast_certificate_acquired_composition}), which concludes the certification phase of \name.

    \item Agreement phase:
    $P_i$ starts the agreement phase by broadcasting a \textsc{help\_req} message (line~\ref{line:broadcast_help_req_composition}) only if it has not previously acquired a certificate (due to the check at line~\ref{line_new:check_for_broadcast_help_req}).
    If $P_i$ receives a \textsc{help\_req} message in round 2 of the agreement phase (line~\ref{line:received_help_req}), $P_i$ replies via a \textsc{help\_reply} message (line~\ref{line:send_help_reply_composition}) which includes (1) an acquired certificate (if any), and (2) a partial signature of $P_i$'s proposal to allow for a (potential) creation of a specific certificate.

    $P_i$ executes the logic of rounds 3 and 4 only if it has not previously acquired a certificate (due to the checks at lines~\ref{line:check_3} and \ref{line:check_4}).
    In round 3, $P_i$ aims to acquire a certificate:
    \begin{compactitem}
        \item If $P_i$ receives a valid certificate in a \textsc{help\_reply} message (line~\ref{line:received_help_reply_check}), $P_i$ acquires the certificate (line~\ref{line:acquire_certificate_help_reply}).

        \item Otherwise, $P_i$ checks if it has received the same value from $t + 1$ processes via \textsc{help\_reply} messages (line~\ref{line:check_for_broadcast_final_certificate_composition}).
        If so, $P_i$ constructs a specific certificate by combining the received $t + 1$ partial signatures into a $(t + 1)$-combined threshold signature (line~\ref{line:obtain_specific_certificate_composition}).

        \item If neither of previous cases occurs, $P_i$ aims to build a general certificate.
        To this end, $P_i$ broadcasts a \textsc{allow-any} message (line~\ref{line:broadcast_allow_any_composition}) which carries a partial signature of the ``allow any'' string.
    \end{compactitem}
    In round 4, $P_i$ either receives a formed certificate from another process (line~\ref{line:receive_final_certificate}) or constructs a general certificate (line~\ref{line:construct_general_certificate}).
    At the end of round 4, $P_i$ proposes the acquired value-certificate pair to EBA (line~\ref{line:propose_to_adaptive_composition}).
    Finally, when $P_i$ decides from EBA (line~\ref{line:decide_eba}), it decides from \name (line~\ref{line:decide_strong_composition}).
\end{compactenum}

\begin{algorithm} [ht]
\caption{\name: Pseudocode (for process $P_i$)}
\label{algorithm:strong_composition}
\footnotesize
\begin{algorithmic} [1]

\State \textbf{Uses:}
\State \hskip2em EBA protocol with $O(n \cdot f)$ communication complexity (see~\cite{cohen2022make}), \textbf{instance} $\mathit{eba}$
\State \hskip2em \cert, a certification protocol with $O(n \cdot f)$ exchanged words (see \Cref{section:bucket_solution}), \textbf{instance} $\mathit{bucket}$

\smallskip
\State \textbf{Input Parameters:}
\State \hskip2em $\mathsf{Value}$ $\mathit{proposal}_i \gets P_i$'s proposal

\smallskip
\State \textbf{Variables:}
\State \hskip2em $\mathsf{Value}$ $\mathit{proposal\_to\_forward}_i \gets \bot$ \BlueComment{proposal to be forwarded to EBA}
\State \hskip2em $\mathsf{Boolean}$ $\mathit{certificate\_acquired}_i \gets \mathit{false}$ \BlueComment{states whether a certificate is acquired}
\State \hskip2em $\mathsf{Certificate}$ $\mathit{certificate}_i \gets \bot$ \BlueComment{acquired certificate}

\smallskip
\State \textcolor{blue}{\(\triangleright\) Certification phase}
\State \textbf{upon} $\mathsf{init}$: \BlueComment{start of the protocol}
\State \hskip2em \textbf{invoke} $\mathit{bucket}.\mathsf{start}(\mathit{proposal}_i)$ \label{line:start_bucket}
\State \hskip2em \textbf{wait for} $\mathit{bucket}.\mathsf{stop}$ \label{line:stop_bucket}

\State \hskip2em \textbf{if} a bucket certificate $\Sigma'$ for some value $v'$ is acquired:
\State \hskip4em $\mathit{proposal\_to\_forward}_i \gets v'$; $\mathit{certificate}_i \gets \Sigma'$; $\mathit{certificate\_acquired}_i \gets \mathit{true}$ \label{line:fast_certificate_acquired_composition}

\smallskip
\State \textcolor{blue}{\(\triangleright\) Agreement phase}
\State \textbf{Round 1:}
\State \hskip2em \textbf{if} $\mathit{certificate\_acquired}_i = \mathit{false}$: \label{line_new:check_for_broadcast_help_req}
\State \hskip4em \textbf{broadcast} $\langle \textsc{help\_req} \rangle$ \label{line:broadcast_help_req_composition}

\smallskip
\State \textbf{Round 2:}
\State \hskip2em \textbf{if} received $\langle \textsc{help\_req} \rangle$ from a process $P_j$: \label{line:received_help_req}
\State \hskip4em \textbf{send} $\langle \textsc{help\_reply}, \mathit{proposal\_to\_forward}_i, \mathit{certificate}_i, \mathsf{ShareSign}_i^{t + 1}(\mathit{proposal}_i) \rangle$ to $P_j$ \label{line:send_help_reply_composition}

\smallskip
\State \textbf{Round 3:} \label{line:round_3}
\State \hskip2em \textbf{if} $\mathit{certificate\_acquired}_i = \mathit{false}$: \label{line:check_3}

\State \hskip4em \textbf{if} received $\langle \textsc{help\_reply}, \mathsf{Value} \text{ } v, \mathsf{Certificate} \text{ } \Sigma, \cdot \rangle$ such that $\mathsf{validate}(v, \Sigma) = \mathit{true}$: \label{line:received_help_reply_check}
\State \hskip6em $\mathit{certificate\_acquired}_i \gets \mathit{true}$; $\mathit{proposal\_to\_forward}_i \gets v$; $\mathit{certificate}_i \gets \Sigma$ \label{line:acquire_certificate_help_reply}

\State \hskip4em \textbf{else if} exists $\mathsf{Value}$ $v$ with $t + 1$ $\langle \textsc{help\_reply}, v, \cdot, \cdot \rangle$ messages received: \label{line:check_for_broadcast_final_certificate_composition} 
\State \hskip6em $\mathit{certificate\_acquired}_i \gets \mathit{true}$; $\mathit{proposal\_to\_forward}_i \gets v$ 
\State \hskip6em $\mathit{certificate}_i \gets \mathsf{Combine}^{t + 1}\big( \{ \mathit{sig} \,|\, \mathit{sig} \text{ is received in the aforementioned $t + 1$ \textsc{help\_reply} messages}   \} \big)$ \label{line:obtain_specific_certificate_composition}
\State \hskip6em \textbf{broadcast} $\langle \textsc{final\_certificate}, \mathit{proposal\_to\_forward}_i, \mathit{certificate}_i \rangle$

\State \hskip4em \textbf{else} \label{line:check_for_broadcast_allow_any_composition}
\State \hskip6em \textbf{broadcast} $\langle \textsc{allow-any}, \mathsf{ShareSign}_i^{t + 1}(\text{``allow any''}) \rangle$ \label{line:broadcast_allow_any_composition}

\smallskip
\State \textbf{Round 4:} \label{line:round_4}
\State \hskip2em \textbf{if} $\mathit{certificate\_acquired}_i = \mathit{false}$: \label{line:check_4}
\State \hskip4em \textbf{if} received $\langle \textsc{final\_certificate}, \mathsf{Value} \text{ } v, \mathsf{Certificate} \text{ } \Sigma \rangle$ such that $\mathsf{validate}(v, \Sigma) = \mathit{true}$: \label{line:receive_final_certificate}
\State \hskip6em $\mathit{proposal\_to\_forward}_i \gets v$; $\mathit{certificate}_i \gets \Sigma$; $\mathit{certificate\_acquired}_i \gets \mathit{true}$
\State \hskip4em \textbf{else:}
\State \hskip6em $\mathit{proposal\_to\_forward}_i \gets \mathit{proposal}_i$
\State \hskip6em $\mathit{certificate}_i \gets \mathsf{Combine}^{t + 1}\big( \{ \mathit{sig} \,|\, \mathit{sig} \text{ is received in the \textsc{allow-any} messages}   \} \big)$ \label{line:construct_general_certificate}\label{line:obtain_general_certificate_composition}
\State \hskip2em \textbf{invoke} $\mathit{eba}.\mathsf{propose}(\mathit{proposal\_to\_forward}_i, \mathit{certificate}_i)$ \label{line:propose_to_adaptive_composition}

\smallskip
\State \textbf{upon} $\mathit{eba}.\mathsf{decide}(\mathsf{Value} \text{ } v, \mathsf{Certificate} \text{ } \Sigma)$: \label{line:decide_eba}
\State \hskip2em \textbf{trigger} $\mathsf{decide}(v)$ \label{line:decide_strong_composition}
\end{algorithmic}
\end{algorithm}

\subsection{Proof of Correctness \& Complexity} \label{subsection:strong_proof}




\paragraph{Proof of correctness.}
First, we prove that \name satisfies strong validity.
Therefore, we start by showing that, if all correct processes propose the same value, all correct processes obtain a certificate by the beginning of round 3.

\begin{lemma} \label{lemma:by_round_3}
If all correct processes propose the same value, then every correct process obtains a certificate by the beginning of round 3. 
\end{lemma}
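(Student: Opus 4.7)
My plan is to fix a correct process $P_i$, let $v$ denote the common proposal, and split the argument according to whether $P_i$ has already acquired a bucket certificate when it finishes the certification phase of \Cref{algorithm:strong_composition}. In the first branch, $P_i$ set $\mathit{certificate}_i$ to some bucket certificate $\Sigma'$ for a value $v'$ at line~\ref{line:fast_certificate_acquired_composition}. The safety property of \cert, together with the fact that every correct process invoked $\mathit{bucket}.\mathsf{start}(v)$, forces $v' = v$; hence $P_i$ already holds a valid certificate for $v$ before round~1 begins, and the lemma follows immediately in this case.

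In the second branch, $P_i$ enters round~1 with $\mathit{certificate\_acquired}_i = \mathit{false}$ and broadcasts $\langle \textsc{help\_req} \rangle$ at line~\ref{line:broadcast_help_req_composition}. By synchrony every correct process receives this help\_req within round~1, and in round~2 each correct $P_k$ therefore sends $P_i$ a message carrying $\mathsf{ShareSign}_k^{t+1}(\mathit{proposal}_k) = \mathsf{ShareSign}_k^{t+1}(v)$ at line~\ref{line:send_help_reply_composition}. Since $n - t = t + 1 + \lceil c \cdot t \rceil \geq t + 1$, process $P_i$ collects at least $t + 1$ help\_reply messages whose partial signatures are on $v$. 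When $P_i$ then processes round~3 it either finds a help\_reply whose carried certificate satisfies $\mathsf{validate}$---which, by the safety of \cert, can only certify $v$---and acquires it at line~\ref{line:acquire_certificate_help_reply}, or else the guard at line~\ref{line:check_for_broadcast_final_certificate_composition} triggers on $v$ and $P_i$ combines the $t + 1$ partial signatures into a specific $(t+1)$-threshold certificate for $v$ at line~\ref{line:obtain_specific_certificate_composition}.

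The main obstacle I anticipate is the delicate interplay between \cert's safety and the shape of the help\_reply messages: a correct process that did not acquire a bucket certificate carries $\mathit{proposal\_to\_forward}_i = \bot$ in the second field of its help\_reply, while the truthful partial signature on $v$ lives in the fourth field. For the guard at line~\ref{line:check_for_broadcast_final_certificate_composition} to fire on $v$, the argument must read it as matching on the value signed by the partial signature rather than on the forwarded value, so that the resulting $(t+1)$-combined threshold signature produced at line~\ref{line:obtain_specific_certificate_composition} aligns with the assignment $\mathit{proposal\_to\_forward}_i \gets v$ and is indeed a valid specific certificate for $v$. Once this indexing point is stated uniformly across correct senders---regardless of which branch of the case split they fell into---the argument reduces to the resilience count $n - t \geq t + 1$ and a single invocation of the safety of \cert, applied consistently to the bucket certificates that may appear inside help\_reply messages relayed by processes of the first branch.
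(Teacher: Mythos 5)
Your proof is correct and follows essentially the same route as the paper's: the only cosmetic difference is that you case-split per process on whether it already holds a bucket certificate, while the paper splits globally on whether \emph{any} correct process does, and both arguments reduce to \cert's safety plus the count $n - t \geq t + 1$ of \textsc{help\_reply} partial signatures on $v$. Your observation that the guard at line~\ref{line:check_for_broadcast_final_certificate_composition} must be read as matching on the value carried by the partial signature rather than on $\mathit{proposal\_to\_forward}$ (which is $\bot$ for correct senders without a certificate) flags a genuine imprecision in the pseudocode that the paper's own proof silently glosses over.
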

\begin{proof}
Let all correct processes propose the same value $v$.
We distinguish two possible cases:
\begin{compactitem}
    \item Let there exist a correct process $P_i$ which acquires a bucket certificate before round 1 (at line~\ref{line:fast_certificate_acquired_composition}).
    In this case, every correct process obtains a certificate by the beginning of round 3 as $P_i$ shares the acquired certificate in a \textsc{help\_reply} message (line~\ref{line:send_help_reply_composition}).
    Thus, the statement of the lemma holds in this scenario.
    
    \item Let no correct process obtain a fast certificate before round 1.
    In this case, every correct process broadcasts a \textsc{help\_req} message (line~\ref{line:broadcast_help_req_composition}).
    Thus, every correct process receives \textsc{help\_reply} messages from all other correct processes, which implies that every correct process receives (at least) $t + 1$ partial signatures of $v$ at the beginning of round 3 (line~\ref{line:check_for_broadcast_final_certificate_composition}).
    Hence, every correct process obtains a specific certificate by the beginning of round 3 (line~\ref{line:obtain_specific_certificate_composition}).
\end{compactitem}
As the statement of the lemma holds in both possible cases, the proof is concluded.
\end{proof}

We are now ready to prove that \name satisfies strong validity.

\begin{lemma} \label{lemma:strong_validity}
\name (\Cref{algorithm:strong_composition}) satisfies strong validity.
\end{lemma}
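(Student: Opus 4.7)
The plan is to reduce strong validity to the external validity of the EBA protocol by arguing that, when all correct processes propose the same value $v$, no valid certificate can ever be formed for any other value $v' \neq v$. Since a correct process invokes $\mathit{eba}.\mathsf{propose}(\cdot, \cdot)$ only on a validated pair, and EBA's external validity guarantees that the decided pair is valid, this will force the decision to be $v$.

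First, I would invoke \Cref{lemma:by_round_3} to pin down the key structural fact: when all correct processes propose $v$, every correct process has already obtained a certificate by the \emph{start} of round~3. This fact is the hinge of the whole argument. Second, I would enumerate the three certificate types defined in \Cref{subsection:strong_pseudocode} and rule out, for each, the possibility of any process (correct or faulty) holding a certificate for a value $v' \neq v$. For \emph{bucket} certificates, this is immediate from the safety property of \cert. For \emph{specific} certificates, the argument is a threshold-signature counting argument: any such certificate requires $t+1$ partial signatures of $v'$, but correct processes only share $\mathsf{ShareSign}_i^{t+1}(\mathit{proposal}_i) = \mathsf{ShareSign}_i^{t+1}(v)$ at \cref{line:send_help_reply_composition}, so at most the $f \leq t$ faulty processes can contribute partial signatures on $v'$; this is strictly fewer than $t+1$, contradicting unforgeability of the threshold scheme. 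For \emph{general} certificates, I would again use a counting argument, but now crucially leveraging \Cref{lemma:by_round_3}: the only way a correct process produces a partial signature of \emph{``allow any''} is at \cref{line:broadcast_allow_any_composition}, which lies inside the guard $\mathit{certificate\_acquired}_i = \mathit{false}$ at \cref{line:check_3}; by \Cref{lemma:by_round_3} every correct process has $\mathit{certificate\_acquired}_i = \mathit{true}$ at that point, so no correct process ever signs \emph{``allow any''}, leaving at most $t$ such partial signatures and again precluding a $(t+1)$-combined threshold signature.

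Combining the three cases, no (correct or faulty) process can obtain a certificate $\Sigma'$ with $\mathsf{validate}(v', \Sigma') = \mathit{true}$ for any $v' \neq v$. Since each correct process is assumed to call $\mathit{eba}.\mathsf{propose}(v'', \Sigma'')$ only when $\mathsf{validate}(v'', \Sigma'') = \mathit{true}$, every correct process proposes a pair whose first component is $v$. Applying external validity of the EBA instance $\mathit{eba}$ to the decision rule at \cref{line:decide_eba,line:decide_strong_composition} yields that any value decided by a correct process equals $v$, which is strong validity.

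The main obstacle I expect is the general-certificate case: the argument only works because \Cref{lemma:by_round_3} gives a \emph{simultaneous} guarantee that every correct process has already acquired some certificate before the \textsc{allow-any} step can fire, which is exactly what kills the supply of honest partial signatures on \emph{``allow any''}. A secondary subtlety is the fact that a single process contributes at most one partial signature of a given message in a $(t+1, n)$-threshold scheme, so $f \leq t$ faulty processes cannot by themselves meet the $t+1$ threshold; this is what makes the counting arguments for both the specific and general certificates tight.
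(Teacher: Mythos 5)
Your proposal is correct and follows essentially the same route as the paper: it rules out each of the three certificate types for $v' \neq v$, using the safety of \cert for bucket certificates, the fact that correct processes only partially sign their own proposal $v$ for specific certificates, and \Cref{lemma:by_round_3} to show no correct process ever signs ``allow any'' for general certificates. The only (cosmetic) difference is that you argue directly via external validity of EBA and make the $f \leq t < t+1$ counting explicit, whereas the paper phrases the same argument as a proof by contradiction.
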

\begin{proof}
By contradiction, suppose that all correct processes propose the same value $v$, and a correct process decides some value $v' \neq v$ (line~\ref{line:decide_strong_composition}).
Thus, there exists a certificate $\Sigma'$ for $v'$.
Let us consider all possible cases:
\begin{compactitem}
    
    \item Let $\Sigma'$ be a specific certificate.
    In this case, $\mathsf{CombinedVerify}^{t + 1}(v', \Sigma') = \mathit{true}$.
    Hence, a correct process has broadcast a \textsc{help\_reply} message containing a (partial) signature of $v'$ (line~\ref{line:send_help_reply_composition}).
    However, this is impossible as all correct processes propose the same value $v \neq v'$.
    Thus, this case is impossible, as well.
    
    \item Let $\Sigma'$ be a general certificate.
    Therefore, $\mathsf{CombinedVerify}^{t + 1}(\text{``any value''}, \Sigma') = \mathit{true}$.
    Hence, a correct process $P_i$ has broadcast an \textsc{allow-any} message (line~\ref{line:broadcast_allow_any_composition}).
    Thus, $P_i$ has not obtained a certificate by the beginning of round 3 (due to the check at line~\ref{line:check_3}).
    However, this contradicts \Cref{lemma:by_round_3}, implying that this case cannot occur.

    \item Let $\Sigma'$ be a bucket certificate.
    However, the existence of $\Sigma'$ is impossible due to the safety property of \cert.
    Thus, this case is impossible.
\end{compactitem}
As no case is possible, there does not exist a certificate $\Sigma'$ for $v'$.
Thus, we reach a contradiction, and the lemma holds.
\end{proof}

Finally, we are ready to prove the correctness of \name.

\begin{theorem}
\name (\Cref{algorithm:strong_composition}) is correct.
\end{theorem}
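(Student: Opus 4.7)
The plan is to reduce the theorem to its three defining properties. Strong validity is already delivered by \Cref{lemma:strong_validity}, so the remaining obligations are agreement and termination. Both follow from the corresponding properties of the EBA building block, provided every correct process invokes $\mathit{eba}.\mathsf{propose}(\cdot,\cdot)$ with a pair $(v,\Sigma)$ satisfying $\mathsf{validate}(v,\Sigma)=\mathit{true}$: agreement then follows because each correct process decides exactly what EBA decides (line~\ref{line:decide_strong_composition}), and termination follows because EBA's termination is triggered as soon as all correct processes submit a valid proposal.

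Showing that every correct $P_i$ reaches line~\ref{line:propose_to_adaptive_composition} is immediate: the termination property of \cert guarantees that all correct processes stop the bucket phase simultaneously, after which the synchronous four-round schedule is executed by each of them. The substantive task is to prove that when $P_i$ reaches that line, the pair $(\mathit{proposal\_to\_forward}_i,\mathit{certificate}_i)$ is valid. I would do a case analysis on the origin of $\mathit{certificate}_i$: a bucket certificate is valid by the safety/interface of \cert; a certificate received via \textsc{help\_reply} or \textsc{final\_certificate} is validated explicitly at line~\ref{line:received_help_reply_check} or line~\ref{line:receive_final_certificate}; a specific certificate formed at line~\ref{line:obtain_specific_certificate_composition} combines $t+1$ partial signatures of the same value and is valid by the correctness of the threshold scheme; and a general certificate formed at line~\ref{line:construct_general_certificate} is valid once at least $t+1$ partial signatures of ``allow any'' have been accumulated.

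The main obstacle is the last case: why does $P_i$, if it reaches round 4 with $\mathit{certificate\_acquired}_i=\mathit{false}$, have at least $t+1$ \textsc{allow-any} partial signatures at hand? I plan to prove this by a structural argument. Since $P_i$ is certificate-less at round 1, it broadcasts \textsc{help\_req} (line~\ref{line:broadcast_help_req_composition}), and every correct process sends it back a \textsc{help\_reply} at round 2. Two observations then force every correct process to broadcast \textsc{allow-any} at round 3: (i) if some correct process $P_j$ held a certificate at the start of round 2, that certificate would appear in $P_j$'s \textsc{help\_reply} to $P_i$, and $P_i$ would acquire it at line~\ref{line:acquire_certificate_help_reply}, contradicting $P_i\in C_4^-$; and (ii) if some correct process formed a specific certificate at line~\ref{line:obtain_specific_certificate_composition}, it would broadcast \textsc{final\_certificate}, which $P_i$ would accept at round 4, again contradicting $P_i\in C_4^-$. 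Hence every correct process reached the \textbf{else} branch at line~\ref{line:check_for_broadcast_allow_any_composition} and emitted \textsc{allow-any}; since $n-t\ge t+1$, the partial signatures received by $P_i$ combine (by the threshold scheme's correctness) into a valid general certificate at line~\ref{line:obtain_general_certificate_composition}.

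With every correct process invoking $\mathit{eba}.\mathsf{propose}(\cdot,\cdot)$ on a valid pair, the termination and agreement of EBA respectively yield the termination and agreement of \name, and \Cref{lemma:strong_validity} supplies strong validity, completing the proof.
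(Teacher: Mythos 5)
Your top-level decomposition is the same as the paper's: strong validity comes from \Cref{lemma:strong_validity}, and termination and agreement are inherited from EBA once every correct process reaches line~\ref{line:propose_to_adaptive_composition}. The paper's own proof stops at that level of detail; you go further and try to discharge the EBA precondition that every proposed pair $(v,\Sigma)$ satisfies $\mathsf{validate}(v,\Sigma)=\mathit{true}$. That obligation is real (the EBA abstraction assumes correct processes only propose valid pairs), and it is precisely in this extra step that your argument has a hole.

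The gap is in the claim that if $P_i$ reaches the \textbf{else} branch of round 4, then \emph{every} correct process broadcast \textsc{allow-any} in round 3. Your observations (i) and (ii) cover a correct process that already held a certificate at round 2 and one that forms a specific certificate at line~\ref{line:obtain_specific_certificate_composition}, but they miss the first branch of round 3: a correct process $P_j$ can acquire a certificate at line~\ref{line:acquire_certificate_help_reply} from a \textsc{help\_reply} sent by a \emph{Byzantine} process. Nothing at line~\ref{line:received_help_reply_check} restricts who the sender is, and the adversary may well possess a valid certificate (for instance, a positive bucket certificate assembled from $f$ Byzantine partial signatures together with the \textsc{disclose} signatures of the few correct processes that happened to propose that value). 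Such a $P_j$ is silent in round 3 --- it broadcasts neither \textsc{allow-any} nor \textsc{final\_certificate} --- and the adversary can deliver its certificate selectively, to $P_j$ but not to $P_i$. If it does this to enough correct processes, $P_i$ collects fewer than $t+1$ \textsc{allow-any} partial signatures, $\mathsf{Combine}^{t+1}$ at line~\ref{line:construct_general_certificate} cannot yield a valid general certificate, and $P_i$ hands EBA an invalid pair. Closing this requires a further idea (e.g., having a process that acquires a certificate in round 3 forward it, or a counting argument showing that enough correct processes still reach line~\ref{line:broadcast_allow_any_composition}). Note that the paper's proof does not address this either --- it simply asserts that every correct process eventually proposes to EBA --- so you have surfaced a step the paper glosses over, but you have not actually resolved it.
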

\begin{proof}
Termination of \name follows from (1) the fact that every correct process eventually proposes to EBA (line~\ref{line:propose_to_adaptive_composition}), and (2) termination of EBA.
Similarly, agreement of \name follows directly from agreement of EBA.
Lastly, strong validity of \name follows from \Cref{lemma:strong_validity}.
\end{proof}

\paragraph{Proof of complexity.}
We start by showing that, if $f \leq t_o$, no correct process sends more than $O(f)$ words from round 1 onward (without counting the messages sent in EBA).

\begin{lemma} \label{lemma:f_messages}
If $f \leq t_o$, no correct process sends more than $O(f)$ words from round 1 onward excluding the words sent in EBA.
\end{lemma}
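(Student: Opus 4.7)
The plan is to leverage liveness of \cert as the single lever. Since $f \leq t_o$, the liveness property from \Cref{subsubsection:certification_problem_definition} guarantees that every correct process acquires a bucket certificate during the certification phase, so every correct $P_i$ enters round 1 with $\mathit{certificate\_acquired}_i = \mathit{true}$. This single fact closes essentially every conditional broadcast in \Cref{algorithm:strong_composition}.

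The bulk of the proof would be a four-round enumeration of sending sites. In round 1, the guard at line~\ref{line_new:check_for_broadcast_help_req} is false for every correct process, so no correct process broadcasts \textsc{help\_req} at line~\ref{line:broadcast_help_req_composition}. In round 2, a correct $P_i$ transmits a \textsc{help\_reply} at line~\ref{line:send_help_reply_composition} only in response to a \textsc{help\_req} received at line~\ref{line:received_help_req}; since no correct process emitted any, every such trigger can only originate from a faulty process, so $P_i$ sends at most $f$ \textsc{help\_reply} messages. Each contains a value, a certificate (of $O(1)$ words, since every certificate type used by \name is constant-sized, as noted in \Cref{subsection:strong_pseudocode}), and a single partial signature, so the round-2 contribution is $O(f)$ words per correct process. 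In rounds 3 and 4, the guards at lines~\ref{line:check_3} and~\ref{line:check_4} are likewise false, so no correct process broadcasts \textsc{final\_certificate}, \textsc{allow-any}, or anything else before invoking EBA at line~\ref{line:propose_to_adaptive_composition}. Summing across rounds 1--4 yields $O(f)$ words per correct process, as required.

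There is no substantive obstacle. The argument is essentially a line-by-line accounting of the pseudocode together with the key observation that disabling the default round-1 \textsc{help\_req} broadcast, via liveness of \cert, caps the round-2 response fan-out by the number of faulty senders. The only ingredient that needs to be in hand is the constant-size property of all three certificate types used by \name (bucket, specific, and general), which is already established when the certificates are introduced.
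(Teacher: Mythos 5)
Your proof is correct and follows essentially the same route as the paper's: invoke liveness of \cert to conclude every correct process enters round 1 with a certificate, observe that no correct process then broadcasts \textsc{help\_req}, bound the \textsc{help\_reply} fan-out by $f$, and use the $O(1)$ size of each reply. The paper's version is just a more compressed statement of the same round-by-round accounting.
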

\begin{proof}
Let us fix any correct process $P$.
If $f \leq t_o$, $P$ acquires a bucket certificate (line~\ref{line:fast_certificate_acquired_composition}) before starting round 1 (by the liveness property of \cert).
Hence, after the attainment of the bucket certificate, $P$ can only send $\textsc{help\_reply}$ messages in round 2 (line~\ref{line:send_help_reply_composition}).
As no correct process ever broadcasts a \textsc{help\_req} message (line~\ref{line:broadcast_help_req_composition}), $P$ sends (at most) $f$ \textsc{help\_reply} messages (line~\ref{line:send_help_reply_composition}).
Given that every \textsc{help\_reply} message has $O(1)$ words, the lemma holds.
\end{proof}

Next, we show that no correct process sends more than $O(n)$ words from round 1 onward (without counting the messages sent in EBA).

\begin{lemma} \label{lemma:n_messages}
No correct process sends more than $O(n)$ words from round 1 onward excluding the words sent in EBA.
\end{lemma}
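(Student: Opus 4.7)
The plan is to enumerate every point in \Cref{algorithm:strong_composition} where a correct process transmits messages from Round 1 onward (outside the invocation of $\mathit{eba}$), bound each contribution by $O(n)$ words, and conclude by summing the constantly many contributions.

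First I would inspect the pseudocode to list the send sites: (i) the broadcast of $\langle \textsc{help\_req} \rangle$ at \cref{line:broadcast_help_req_composition} in Round 1; (ii) the unicast of $\langle \textsc{help\_reply}, \ldots \rangle$ at \cref{line:send_help_reply_composition} in Round 2, which is issued at most once per sender of a \textsc{help\_req} message, hence to at most $n$ distinct processes; (iii) the broadcasts of $\langle \textsc{final\_certificate}, \ldots \rangle$ and $\langle \textsc{allow-any}, \ldots \rangle$ at \cref{line:obtain_specific_certificate_composition,line:broadcast_allow_any_composition} in Round 3; and (iv) Round 4 performs no send operation (only the local call to $\mathit{eba}.\mathsf{propose}$, which we are explicitly excluding). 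A broadcast consists of $n$ point-to-point messages; the key point is therefore that each individual message carries $O(1)$ words.

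The size argument is straightforward: $\langle \textsc{help\_req} \rangle$ and $\langle \textsc{allow-any}, \cdot \rangle$ carry a constant number of signatures and values; each \textsc{help\_reply} carries a value, a certificate, and a partial signature, and the discussion of certificates in \Cref{subsection:strong_pseudocode} explicitly states that \emph{any} certificate (bucket, specific, or general) contains only $O(1)$ words, so \textsc{help\_reply} is $O(1)$ words; \textsc{final\_certificate} likewise packages a value together with an $O(1)$-word certificate. Therefore each of the (at most) $n$ messages emitted at any one of the send sites contributes $O(1)$ words, giving an $O(n)$ bound for each site.

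Summing over the constantly many send sites yields a total of $O(n)$ words sent by any single correct process from Round 1 onward, excluding words sent inside the nested $\mathit{eba}$ instance. There is no genuine obstacle here — the proof is a routine bookkeeping argument — the only subtlety to be careful about is to (a) verify that the constant-word size of certificates, asserted in \Cref{subsection:strong_pseudocode}, is invoked when bounding \textsc{help\_reply} and \textsc{final\_certificate}, and (b) note that the $\mathit{eba}.\mathsf{propose}$ call at \cref{line:propose_to_adaptive_composition} is deliberately out of scope because the contribution of EBA is accounted for separately.
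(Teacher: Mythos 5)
Your proof is correct and matches the paper's approach: the paper's own proof is simply ``Follows directly from \Cref{algorithm:strong_composition},'' and your argument is exactly the routine inspection that this one-liner leaves implicit --- enumerating the constantly many send sites from Round 1 onward, noting each emits at most $n$ messages of $O(1)$ words (using the stated $O(1)$-word bound on certificates), and summing. The only nit is a citation slip: the \textsc{final\_certificate} broadcast occurs on the (unlabeled) line following \cref{line:obtain_specific_certificate_composition}, not on that line itself.
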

\begin{proof}
Follows directly from \Cref{algorithm:strong_composition}.
\end{proof}

Finally, we are ready to prove the word complexity of \name.

\begin{theorem}
\name (\Cref{algorithm:strong_composition}) achieves $O(n \cdot f)$ word complexity.
\end{theorem}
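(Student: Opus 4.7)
The plan is to decompose the total word count into three disjoint contributions and bound each separately: (i) words exchanged inside the \cert instance during the certification phase, (ii) words exchanged in rounds 1--4 of the agreement phase (the \textsc{help\_req}, \textsc{help\_reply}, \textsc{final\_certificate}, and \textsc{allow-any} messages), and (iii) words exchanged inside the EBA instance. Contributions (i) and (iii) are immediate from the complexities of the building blocks: \cert is stated to use $O(n \cdot f)$ words, and the EBA implementation of Cohen \emph{et al.}\ is likewise $O(n \cdot f)$. So the real content of the proof is to bound (ii) by $O(n \cdot f)$.

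For (ii), the natural move is a case split on whether $f \leq t_o$ or $f > t_o$. In the first case I would invoke \Cref{lemma:f_messages}: every correct process sends at most $O(f)$ words from round 1 onward, and summing over the at most $n$ correct senders gives $O(n \cdot f)$. In the second case, note that $t_o = \lfloor c \cdot t \rfloor$ with $c$ a fixed positive constant and $n = (2 + \Theta(1)) t + 1$, so $t_o = \Theta(n)$; hence $f > t_o$ implies $f = \Omega(n)$, i.e.\ $n = O(f)$. Applying \Cref{lemma:n_messages} then yields at most $O(n)$ words per correct process and thus $O(n^2) = O(n \cdot f)$ overall. Combining the two cases, the agreement phase contributes $O(n \cdot f)$ words in either regime.

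Finally, adding the three contributions gives $O(n \cdot f) + O(n \cdot f) + O(n \cdot f) = O(n \cdot f)$, which is the desired bound. The main obstacle — really the only subtlety — is the second case of the case split: one has to notice that \Cref{lemma:f_messages} gives no guarantee when $f > t_o$ (because \cert's liveness is not promised in that regime, so correct processes may actually broadcast \textsc{help\_req} in round 1), and that the only way to recover an $O(n \cdot f)$ bound there is to exploit $n = O(f)$ coming from the relationship between $t_o$ and $n$. Everything else is bookkeeping over the three phases.
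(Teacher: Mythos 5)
Your proof is correct and takes essentially the same route as the paper's: the same split into the certification phase, rounds 1--4, and EBA, the same case analysis on $f \leq t_o$ versus $f > t_o$, and the same appeals to \Cref{lemma:f_messages} and \Cref{lemma:n_messages}. If anything, you state the crucial implication in the $f > t_o$ case in the correct direction ($f = \Omega(n)$, hence $n = O(f)$ and $O(n^2) = O(n \cdot f)$), whereas the paper's text writes ``$f \in O(n)$'' where it means $f \in \Omega(n)$.
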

\begin{proof}
We consider two cases:
\begin{compactitem}
    \item Let $f \leq t_{o}$.
    In this case, correct processes send $O(n \cdot f)$ words during the certification phase.
    From round 1 onward, each correct process sends at most $O(f)$ words (by \Cref{lemma:f_messages}).
    Finally, as the word complexity of EBA is $O(n \cdot f)$, the overall word complexity is $O(n \cdot f)$ in this case.
    
    \item Otherwise, $f > t_o$, which implies that $f \in O(n)$.
    During the certification phase, all correct processes send (at most) $O(n^2)$ words.
    Moreover, from round 1 onward, each correct process sends at most $O(n)$ words (by \Cref{lemma:n_messages}).
    Lastly, in EBA, correct processes exchange (at most) $O(n^2)$ words.
    Thus, in this case, the word complexity of \name is $O(n^2)$, which is indeed $O(n \cdot f)$ as $f \in O(n)$.
\end{compactitem}
As the word complexity of \name is $O(n \cdot f)$ in both cases, the theorem holds.
\end{proof}
\section{\cert: Certification Protocol} \label{section:bucket_solution}

In this section, we describe \cert, a building block of \name (\Cref{algorithm:strong_composition}).
\cert is a certification protocol which exchanges $O(n \cdot f)$ words.

\subsection{Pseudocode}

The pseudocode of \cert is given in \Cref{algorithm:certificate_creation,algorithm:bucket_plus}.
First, we describe bucket certificates.

\paragraph{Bucket certificates.}
\cert produces two types of (bucket) certificates:
\begin{compactenum}
    \item \emph{Positive certificates:}
    Positive certificates, same as specific certificates in \name (see \Cref{subsection:strong_pseudocode}), vouch that one specific value is safe.
    Namely, a certificate $\Sigma^+$ is said to be a \emph{positive bucket certificate} for a value $v$ if and only if $\Sigma^+$ is a $(t + 1)$-combined threshold signature for $v$.

    \item \emph{Negative certificates:}
    Negative certificates, same as general certificates in \name (see \Cref{subsection:strong_pseudocode}), prove that not all correct processes have proposed the same value.
    However, negative bucket certificates have a more complex format than the general certificates in \name.

    In order to properly define negative certificate, we introduce \emph{groups}.
    A group $g$ is a tuple $(x, y, \mathit{tsignature})$, where $x, y \in \mathsf{Value}$ and $\mathit{tsignature}$ is a threshold signature, such that (1) $x < y$, and (2) $\mathit{tsignature}$ is a $(t + 1)$-combined threshold signature of $(x, y)$.
    Lastly, a negative certificate is a tuple $\big(g_1 = (x_1, y_1, \mathit{tsignature}_1), g_2 = (x_2, y_2, \mathit{tsignature}_2), ..., g_k = (x_k, y_k, \mathit{tsignature}_k)  \big)$ of $k > 0$ groups such that:
    \begin{compactitem}
        \item for every $i \in [2, k]$, $y_{i - 1} = x_i$, and

        \item $x_1 = v_{\mathit{min}}$, and

        \item $y_k = v_{\mathit{max}}$.
    \end{compactitem}
\end{compactenum}

\begin{algorithm} [ht]
\caption{$\mathsf{certificate\_creation}(\mathit{start\_value}_i, L)$: Pseudocode (for process $P_i$)}
\label{algorithm:certificate_creation}
\footnotesize
\begin{algorithmic} [1]

\State \textbf{Input Parameters:}
\State \hskip2em $\mathsf{Value}$ $\mathit{start\_value}_i$ \BlueComment{the input value; first parameter}
\State \hskip2em $\mathsf{Process}$ $L$ \BlueComment{the leader; second parameter}

\smallskip
\State \textbf{Variables:}
\State \hskip2em $\mathsf{Boolean}$ $\mathit{certificate\_acquired}_i \gets \mathit{false}$ \BlueComment{$\mathit{true}$ if a certificate is acquired; otherwise, $\mathit{false}$}
\State \hskip2em $\mathsf{Set}(\mathsf{Group})$ $\mathit{groups}_i \gets \bot$ \BlueComment{partitioning groups}

\smallskip
\State \textbf{Round 1:}
\State \hskip2em \textcolor{blue}{\(\triangleright\) inform the leader of the starting value}
\State \hskip2em \textbf{send} $\langle \textsc{disclose}, \mathit{start\_value}_i, \mathsf{ShareSign}^{t+1}_i(\mathit{start\_value}_i) \rangle$ to $L$ \label{line:send_disclose_leader}

\smallskip
\State \textbf{Round 2:} \BlueComment{executed only by $L$}
\State \hskip2em \textbf{if} $P_i = L$:
\State \hskip4em Let $\mathit{discloses}$ denote the set of received \textsc{disclose} messages
\State \hskip4em \textbf{if} exists $\mathsf{Value} \text{ } v$ such that $\langle \textsc{disclose}, v, \mathsf{P\_Signature} \text{ } \mathit{sig}\rangle$ is received from $t+1$ processes: \label{line:t+1_disclose_check}
\State \hskip6em \textcolor{blue}{\(\triangleright\) a positive certificate for $v$ can be constructed}
\State \hskip6em Let $\mathit{positive\_certificate} \gets \mathsf{Combine}^{t + 1}(\{\mathit{sig} \,|\, \mathit{sig} \text{ is received in the } t + 1 \text{ received } \textsc{disclose} \text{ messages}\})$ \label{line:combine_certification_phase}
\State \hskip6em $\mathit{certificate\_acquired}_i \gets \mathit{true}$
\State \hskip6em \textbf{broadcast} $\langle \textsc{certificate}, v, \mathit{positive\_certificate} \rangle$  \label{line:broadcast_positive_certificate}

\State \hskip4em \textcolor{blue}{\(\triangleright\) check if enough messages are received}
\State \hskip4em \textbf{else if} $|\mathit{discloses}| \geq n - t_o$: \label{line:else_2t+1_disclose}
\State \hskip6em \textcolor{blue}{\(\triangleright\) a positive certificate cannot be constructed; start partitioning}
\State \hskip6em $\mathit{groups}_i \gets \mathsf{partition}(\mathit{discloses})$ \label{line:partitioning}
\State \hskip6em \textbf{broadcast} $\langle \textsc{partition\_req},  \mathit{groups_i}\rangle$ \label{line:broadcast_groups} 

\smallskip
\State \textbf{Round 3:}
\State \hskip2em \textbf{if} $\langle \textsc{partition\_req}, \mathsf{Set(Group)} \text{ } \mathit{groups} \rangle$ is received from $L$ such that
$|\mathit{groups}| \in O(1)$:
\State \hskip4em Let $\mathit{negatives} \gets \emptyset$
\State \hskip4em \textbf{for each} $\big(g = (x, y, \cdot)\big) \in \mathit{groups}$ such that $\neg(x \leq \mathit{start\_value}_i < y)$:
\State \hskip6em $\mathit{negatives} \gets \mathit{negatives} \cup (g, \mathsf{ShareSign}^{t+1}_i(x, y))$ \label{line:partial_signature_partition}
\State \hskip4em \textbf{send} $\langle \textsc{partition\_reply}, \mathit{negatives} \rangle $ to $L$ \label{line:share_sign_description}



\smallskip
\State \textbf{Round 4:}
\State \hskip2em \textbf{if} $P_i = L$ and $\mathit{certificate\_acquired}_i = \mathit{false}$: \BlueComment{executed only by $L$}
\State \hskip4em Let $\mathit{replies}$ denote the set of received \textsc{partition\_reply} messages \label{line:receive_partition_reply}
\State \hskip4em Let $\mathsf{Certificate}$ $\mathit{negative\_certificate} \gets \mathsf{construct\_negative\_certificate}(\mathit{replies}, \mathit{groups}_i)$  \label{line:compute_negative_certificate}
\State \hskip4em \textbf{if} $\mathit{negative\_certificate} \neq \bot$: \BlueComment{negative certificate is successfully created} \label{line:check_success_negative_certificate}
\State \hskip6em $\mathit{certificate\_acquired}_i \gets \mathit{true}$
\State \hskip6em \textbf{broadcast} $\langle \textsc{certificate}, \mathsf{default\_value}, \mathit{negative\_certificate} \rangle$ \BlueComment{$\mathsf{default\_value} \in \mathsf{Value}$ is a constant} \label{line:broadcast_negative_certificate}

\smallskip
\State \textbf{Round 5:}
\State \hskip2em \textbf{if} $\langle \textsc{certificate}, \mathsf{Value} \text{ } \mathit{value}, \mathsf{Certificate} \text{ } \Sigma \rangle$ is received from $L$ such that $\mathsf{validate}(\mathit{value}, \Sigma) = \mathit{true}$: \label{line:received_certificate}
\State \hskip4em \textbf{return} $\mathit{value}, \Sigma$ \label{line:return_1}
\State \hskip2em \textbf{else:}
\State \hskip4em \textbf{return} $\bot, \bot$ \label{line:return_2}

\end{algorithmic}
\end{algorithm}

\begin{algorithm} [ht]
\caption{\cert: Pseudocode (for process $P_i$)}
\label{algorithm:bucket_plus}
\footnotesize
\begin{algorithmic} [1]
\State \textbf{Input Parameters:}
\State \hskip2em $\mathsf{Value}$ $\mathit{start\_value}_i \gets v$, where $P_i$ has previously invoked $\mathsf{start}(v)$ 

\smallskip
\State \textbf{Variables:}
\State \hskip2em $\mathsf{Value}$ $\mathit{certified\_value}_i \gets \bot$
\State \hskip2em $\mathsf{Certificate}$ $\mathit{certificate}_i \gets \bot$
\State \hskip2em $\mathsf{Boolean}$ $\mathit{certificate\_acquired}_i \gets \mathit{false}$
\State \hskip2em $\mathsf{Boolean}$ $\mathit{started\_creation}_i = \mathit{false}$

\medskip
\State \textbf{for} $j = 1$ to $t_o + 1$:

\smallskip
\State \hskip2em \textbf{Round 1:}
\State \hskip4em \textbf{if} $\mathit{certificate\_acquired}_i= \mathit{false}$: \label{line:check_certificate_acquired_1}
\State \hskip6em \textbf{if} $P_i = P_j$: \BlueComment{if $P_j$ is the leader of this iteration}
\State \hskip8em \textbf{broadcast} $\langle \textsc{aid\_req} \rangle$ \label{line:bcast_aid_req}
\State \hskip6em \textbf{else}:
\State \hskip8em \textbf{send} $\langle \textsc{aid\_req} \rangle$ to $P_j$ \label{line:send_aid_req}

\smallskip
\State \hskip2em \textbf{Round 2:}
\State \hskip4em \textbf{if} $\langle \textsc{aid\_req} \rangle$ is received from $P_j$: \label{line:received_aid_req}
\State \hskip6em $\mathit{started\_creation}_i = \mathit{true}$
\State \hskip6em \textbf{invoke} $\mathsf{certificate\_creation}(\mathit{start\_value}_i, P_j)$ \label{line:start_Bucket_instance}
\State \hskip4em \textbf{else if} $P_i = P_j$ and $\langle \textsc{aid\_req} \rangle$ is received from some process $P_k$:
\State \hskip6em \textbf{send} $\langle \textsc{aid\_reply}, \mathit{certified\_value}_i, \mathit{certificate}_i \rangle$ \text{to} $P_k$ \label{line:send_aid_reply}

\smallskip
\State \hskip2em \textbf{Round 3:}
\State \hskip4em \textbf{if} $\mathit{started\_creation}_i = \mathit{false}$ and $\langle \textsc{aid\_reply}, \mathsf{Value} \text{ } v', \mathsf{Certificate} \text{ } \Sigma' \rangle$ is received from $P_j$ such that $\mathsf{validate}(v', \Sigma') = \mathit{true}$: \label{line:received_aid_reply}
\State \hskip6em
$\mathit{certified\_value}_i \gets v'$; $\mathit{certificate}_i \gets \Sigma'$ \label{line:certificate_acquired_aid_reply}

\smallskip
\State \hskip2em \textbf{Round 6:}
\State \hskip4em $\mathit{certified\_value}_i, \mathit{certificate}_i \gets $ outputs of $\mathsf{certificate\_creation}$ (if previously invoked) \label{line:certificate_creation_return}
\State \hskip4em \textbf{if} $\mathit{certificate}_i \neq \bot$:
\State \hskip6em $\mathit{certificate\_acquired}_i \gets \mathit{true}$
\State \hskip6em \textbf{trigger} $\mathsf{acquire}(\mathit{certified\_value}_i, \mathit{certificate}_i)$

\smallskip
\State \textbf{trigger} $\mathsf{stop}$ \label{line:bucket_stop}


\end{algorithmic}
\end{algorithm}

\paragraph{$\mathsf{certificate\_creation}$ (\Cref{algorithm:certificate_creation}).}
The input parameters of the $\mathsf{certificate\_creation}$ subprotocol are $\mathit{start\_value}_i$, the value with which $P_i$ invoked $\cert.\mathsf{start}(\mathsf{start\_value}_i)$, and $L$, the leader of the specific $\mathsf{certificate\_creation}$ invocation.
When a correct process $P_i$ invokes $\mathsf{certificate\_creation}$, it first informs the leader $L$ of its $\mathit{start\_value}_i$ by sending a \textsc{disclose} message to $L$ (line~\ref{line:send_disclose_leader}); the \textsc{disclose} message contains $\mathit{start\_value}_i$ and a partial signature of $\mathit{start\_value}_i$.
When $L$ receives the previously sent \textsc{disclose} messages, it first checks (if $L$ is correct) if it has received the same value from (at least) $t + 1$ processes (line~\ref{line:t+1_disclose_check}).
If so, $L$ constructs a positive bucket certificate by combining the received partial signatures into a $(t + 1)$-combined threshold signature (line~\ref{line:combine_certification_phase}).
Otherwise, $L$ checks if it has received $n - t_o \geq 2t + 1$ \textsc{disclose} messages (line~\ref{line:else_2t+1_disclose}).
If it has, $L$ starts \emph{partitioning} the received values into groups (line~\ref{line:partitioning}) such that:
\begin{compactitem}
    \item For every constructed group $g = (x, y, \cdot)$, the values of at most $t$ processes ``fall into'' $g$, i.e., there exist at most $t$ processes $P_j$ such that $P_j$'s value $\mathit{value}_j$ is such that $x \leq \mathit{value}_j < y$.

    \item For any two adjacent constructed groups $g = (x, y, \cdot)$ and $g' = (x', y', \cdot)$, the values of at least $t + 1$ processes ``fall into'' $g$ or $g'$, i.e., there exist at least $t + 1$ processes $P_j$ such that $P_j$'s value $\mathit{value}_j$ is such that $x \leq \mathit{value}_j < y$ or $x' \leq \mathit{value}_j < y'$.
\end{compactitem}
Importantly, $L$ is able to partition all received values into $O(1)$ groups, which ensures that a negative certificate contains $O(1)$ words (as $O(1)$ groups will eventually be accompanied by a threshold signature).
Once $L$ has partitioned all the values into groups, it sends the groups to all processes (line~\ref{line:broadcast_groups}) in order to form accompanying threshold signatures.
When $P_i$ receives the formed groups, $P_i$ sends back a partial signature of every group to which its value \emph{does not} belong (lines~\ref{line:partial_signature_partition} and~\ref{line:share_sign_description}).

Once $L$ receives partial signatures from the other processes (line~\ref{line:receive_partition_reply}), $L$ tries to create a negative certificate (line~\ref{line:compute_negative_certificate}).
(If $L$ has previously constructed a positive certificate, this part of the $\mathsf{certification\_creation}$ subprotocol is not executed.)
If $L$ successfully creates a negative certificate (line~\ref{line:check_success_negative_certificate}, it disseminates it to every process (line~\ref{line:broadcast_negative_certificate}).
Finally, if $P_i$ receives a valid (positive or negative) certificate from $L$ (line~\ref{line:received_certificate}), $P_i$ returns that certificate (line~\ref{line:return_1}).
Otherwise, $P_i$ returns $\bot$ (line~\ref{line:return_2}).

\paragraph{\cert's Protocol (\Cref{algorithm:bucket_plus}).}
\cert operates in $t_o + 1$ iteration such that each iteration takes up to 6 rounds and has its unique leader.
We explain \cert's pseudocode from the perspective of a correct process $P_i$.

In the first round of each iteration, $P_i$ checks if it has already acquired a certificate (line~\ref{line:check_certificate_acquired_1}).
If it has not, $P_i$ broadcasts an \textsc{aid\_req} message (line~\ref{line:bcast_aid_req}) if $P_i$ is the leader of the iteration.
Otherwise, $P_i$ sends an \textsc{aid\_req} message to the leader (line~\ref{line:send_aid_req}). 

If $P_i$ receives an \textsc{aid\_req} message from the leader (line~\ref{line:received_aid_req}), $P_i$ initiates the $\mathsf{certificate\_creation}$ subprotocol (line~\ref{line:start_Bucket_instance}).
Otherwise, if $P_i$ is the leader and it has received an \textsc{aid\_req} from any process $P_k$, then $P_i$ sends the certificate it has already acquired $P_k$ (line~\ref{line:send_aid_reply}).
(Note that $P_i$ must have acquired a certificate as it has not previously broadcast and received an \textsc{aid\_req} message.)

In the third round, if $P_i$ has not invoked $\mathsf{certificate\_creation}$ and it has received a valid certificate from the leader (line~\ref{line:received_aid_reply}), $P_i$ acquires a certificate (line~\ref{line:certificate_acquired_aid_reply}).
In round 6, if $P_i$ has previously invoked $\mathsf{certificate\_creation}$, $P_i$ stores the certificate returned by $\mathsf{certificate\_creation}$ (line~\ref{line:certificate_creation_return}).
Finally, once all $t_o + 1$ iterations terminate, $P_i$ stops \cert (line~\ref{line:bucket_stop}).

\subsection{Proof of Correctness \& Complexity}




\paragraph{Proof of correctness.}
We start by proving that \cert satisfies safety.

\begin{lemma} \label{lemma:certification_safety}
\cert (\cref{algorithm:certificate_creation,algorithm:bucket_plus}) satisfies safety.
\end{lemma}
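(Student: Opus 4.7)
My plan is to argue by cases on the structure of any certificate $\Sigma'$ that would validate a value $v' \neq v$. Since every certificate ever stored by a process in \cert ultimately originates from some invocation of $\mathsf{certificate\_creation}$—either acquired directly by returning at line~\ref{line:return_1}, or forwarded to another process through an \textsc{aid\_reply} message at line~\ref{line:send_aid_reply} (which itself presupposes a previously acquired, valid certificate)—it suffices to rule out both constructions inside $\mathsf{certificate\_creation}$: positive bucket certificates for $v'$, and negative bucket certificates. In each subcase I will appeal to the unforgeability of the $(t+1, n)$-threshold signature scheme to derive a contradiction.

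For the positive case, I would observe that validity requires a $(t+1)$-combined threshold signature on $v'$. Partial signatures on $v'$ are produced by correct processes only at line~\ref{line:send_disclose_leader}, and only on their own start value. Since by hypothesis every correct process has start value $v \neq v'$, at most the $f \leq t$ faulty processes contribute partial signatures on $v'$, and by unforgeability no valid $(t+1)$-combined signature on $v'$ can ever be assembled.

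The negative case is the more delicate one, so I would spend more effort on it. The key observation is that the structural constraints $x_1 = v_{\mathit{min}}$, $y_k = v_{\mathit{max}}$, and $y_{i-1} = x_i$ for every $i \in [2,k]$ force the half-open intervals $[x_i, y_i)$ to form a contiguous partition covering every value of interest. Consequently, there is a (unique) group $g_j = (x_j, y_j, \cdot)$ with $x_j \leq v < y_j$, i.e., $v$ falls into $g_j$. The guard at line~\ref{line:partial_signature_partition} allows a correct process to emit a partial signature on $(x_j, y_j)$ only when its own start value lies \emph{outside} $[x_j, y_j)$, so under our assumption no correct process signs $g_j$. At most $f \leq t$ partial signatures on $(x_j, y_j)$ can therefore exist, which by unforgeability cannot be combined into the $(t+1)$-combined threshold signature that the definition of a valid negative certificate demands for $g_j$. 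Hence no valid negative certificate can ever be formed, contradicting the assumed existence of $\Sigma'$.

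The main obstacle I foresee is the partition argument for negative certificates: I must be precise about the half-open intervals and about the convention for covering the extreme value $v_{\mathit{max}}$, so that the claim ``$v$ belongs to exactly one group $g_j$'' is rigorously justified regardless of where $v$ sits in the ordering. Once that is pinned down, the rest reduces to a clean pigeonhole-style count of signers combined with threshold-signature unforgeability, applied uniformly across the two certificate types.
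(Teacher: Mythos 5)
Your proof is correct and follows essentially the same route as the paper's: a case split on positive versus negative certificates, ruling out the positive case because no correct process discloses a value other than $v$, and the negative case because the group covering $v$ cannot gather $t+1$ partial signatures when no correct process's value lies outside it. Your additions (the explicit unforgeability count of at most $f \leq t$ faulty signers, and the justification via the constraints $x_1 = v_{\mathit{min}}$, $y_{i-1}=x_i$, $y_k = v_{\mathit{max}}$ that some group must contain $v$) only make the paper's argument more rigorous, and your worry about the edge case $v = v_{\mathit{max}}$ falling outside every half-open interval is a genuine definitional wrinkle that the paper itself glosses over.
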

\begin{proof}
By contradiction, suppose that all correct processes have started certification with the same value $v$, and that a (correct or faulty) process obtains a certificate $\Sigma'$ for some value $v' \neq v$.
Let us consider two possible scenarios:
\begin{compactitem}
    \item Let $\Sigma'$ is a positive certificate.
    Hence, $\Sigma'$ is a $(t + 1)$-combined threshold signature of $v'$, which implies that (at least) one correct process have sent a \textsc{disclose} message for $v'$ (line~\ref{line:send_disclose_leader} of \Cref{algorithm:certificate_creation}).
    However, this is impossible as no correct process has started certification with a value different from $v$ (line~\ref{line:start_Bucket_instance} of \Cref{algorithm:bucket_plus}).
    Thus, this scenario is impossible.

    \item Let $\Sigma'$ be a negative certificate.
    Let $g = (x, y, \cdot)$ denote the group incorporated into $\Sigma'$ such that $x \leq v < y$.
    As $g$ includes a $(t + 1)$-combined threshold signature of $(x, y)$, (at least) one correct process participated in creating the signature (line~\ref{line:share_sign_description} of \Cref{algorithm:certificate_creation}).
    However, this cannot happen as all correct processes have started certification with $v$.
    Therefore, this scenario is impossible as well.
\end{compactitem}
As neither of the two scenarios is possible, we reach a contradiction and the lemma holds.
\end{proof}

Next, we prove \cert's liveness.
To this end, we show that, if all correct processes participate in an iteration $\mathcal{I}$ of the $\mathsf{certificate\_creation}$ subprotocol, the leader of $\mathcal{I}$ is correct and $f \leq t_o$, all correct processes obtain a certificate in $\mathcal{I}$.

\begin{lemma} \label{lemma:iteration_certificate}
Consider any specific iteration $\mathcal{I}$ of the $\mathsf{certificate\_creation}$ subprotocol (\Cref{algorithm:certificate_creation}) such that (1) all correct processes participate in $\mathcal{I}$, (2) the leader $L$ of $\mathcal{I}$ is correct, and (3) $f \leq t_o$.
All correct processes obtain a certificate in $\mathcal{I}$.
\end{lemma}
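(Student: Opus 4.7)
The plan is to trace through $\mathsf{certificate\_creation}$ step by step under the three hypotheses and show that by round 5 the leader $L$ has broadcast a valid certificate that every correct process accepts on line~\ref{line:received_certificate}. First, since all correct processes participate in $\mathcal{I}$, each of them sends a well-formed $\langle \textsc{disclose}, \mathit{start\_value}_i, \mathsf{ShareSign}^{t+1}_i(\mathit{start\_value}_i) \rangle$ message to $L$ in round 1 (line~\ref{line:send_disclose_leader}). Because $L$ is correct and the network is synchronous, in round 2 $L$'s set $\mathit{discloses}$ contains all messages from correct processes, so $|\mathit{discloses}| \geq n - f \geq n - t_o$, meaning the check on line~\ref{line:else_2t+1_disclose} will always be satisfied if the check on line~\ref{line:t+1_disclose_check} fails.

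I would then split into two cases according to which branch $L$ enters. In the positive branch (line~\ref{line:t+1_disclose_check}), $L$ holds $t+1$ valid partial signatures on the same value $v$, combines them on line~\ref{line:combine_certification_phase}, and broadcasts a positive certificate on line~\ref{line:broadcast_positive_certificate}; by synchrony every correct process receives it, the check on line~\ref{line:received_certificate} passes, and a certificate is returned on line~\ref{line:return_1}. In the partitioning branch, I would argue that $L$ constructs a negative certificate successfully. The partitioning procedure (called on line~\ref{line:partitioning}) produces an $O(1)$-sized family of groups such that each group $g = (x, y, \cdot)$ contains at most $t$ of the disclosed values. Hence for every such $g$, at most $t$ correct processes have $x \leq \mathit{start\_value} < y$, leaving at least $(n - f) - t \geq n - t_o - t$ correct processes that will sign $(x, y)$ on line~\ref{line:partial_signature_partition} during round 3.

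The key calculation is that, with $n = 2t + 1 + \lceil ct \rceil$ and $t_o = \lfloor ct \rfloor$, we have $n - t_o - t \geq t + 1$, which I expect to be the main obstacle since it pins down why the constant $c$ appears in the resilience bound and why the argument would fail at the classical $n = 2t+1$ threshold. Given this inequality, for each group $g$ the leader receives at least $t+1$ valid partial signatures on $(x, y)$ in round 4 (line~\ref{line:receive_partition_reply}), so $\mathsf{construct\_negative\_certificate}$ on line~\ref{line:compute_negative_certificate} returns a well-formed negative certificate whose $O(1)$ groups cover the entire value range; the check on line~\ref{line:check_success_negative_certificate} passes and $L$ broadcasts it on line~\ref{line:broadcast_negative_certificate}.

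Finally, I would close by invoking synchrony once more: since $L$ is correct and broadcasts a \textsc{certificate} message in either branch, every correct process receives it by round 5, the $\mathsf{validate}$ check on line~\ref{line:received_certificate} succeeds (by construction of positive/negative certificates as defined earlier in \S\ref{section:bucket_solution}), and every correct process returns a non-$\bot$ value-certificate pair on line~\ref{line:return_1}, which is what it means to obtain a certificate in $\mathcal{I}$.
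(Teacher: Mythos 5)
Your proposal is correct and follows essentially the same two-case structure as the paper's proof: all correct processes disclose to $L$, and either $L$ combines $t+1$ matching partial signatures into a positive certificate, or the $\geq n - t_o$ received values trigger partitioning, where each group of at most $t$ values excludes at least $n - f - t \geq t+1$ correct processes who then sign it, yielding a negative certificate. Your explicit computation $n - t_o - t \geq t+1$ is the same counting argument the paper leaves slightly more implicit, so there is nothing further to add.
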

\begin{proof}
All correct processes send their values to $L$ (line~\ref{line:send_disclose_leader}).
There exist two possibilities:
\begin{compactitem}
    \item $L$ receives the identical value from (at least) $t + 1$ processes; let that value be $v$.
    In this case, $L$ is able to create a positive certificate (line~\ref{line:combine_certification_phase}), which is then disseminated (line~\ref{line:broadcast_positive_certificate}).
    Thus, the statement of the lemma holds in this case.

    \item $L$ does not receive the identical value from $t + 1$ processes.
    As all correct processes participate in $\mathcal{I}$ and $f \leq t_o$, the number of received values is $\geq n - t_o$, which implies that the check at line~\ref{line:else_2t+1_disclose} passes.
    Hence, $L$ indeed partitions all the received values into groups (line~\ref{line:partitioning}), and disseminates the groups to all processes via \textsc{partition\_req} messages (line~\ref{line:broadcast_groups}).

    As $n - t_o \geq 2t + 1$ and each formed groups includes value of at most $t$ processes, each group \emph{does not} contain values from (at least) $t + 1$ correct processes.
    Hence, for each formed group, at least $t + 1$ correct processes partially sign the group's description (line~\ref{line:partial_signature_partition}), and send it back to $L$ (line~\ref{line:share_sign_description}).
    Therefore, $L$ successfully constructs a negative certificate (line~\ref{line:compute_negative_certificate}), which $L$ then disseminates to all processes (line~\ref{line:broadcast_negative_certificate}).
    THe statement of the lemma holds in this case as well.
\end{compactitem}
As the statement of the lemma holds in both possible cases, the proof is concluded.
\end{proof}



The next lemma proves that, if $f \leq t_o$, all correct processes obtain a certificate in the first iteration of the for loop in \Cref{algorithm:bucket_plus} with a correct leader.

\begin{lemma} \label{lemma:liveness_helper}
Let $f \leq t_o$.
All correct processes obtain a certificate in the first iteration $\mathcal{J}$ of the for loop in \Cref{algorithm:bucket_plus} such that the leader $P_{\mathcal{J}}$ of $\mathcal{J}$ is correct.
\end{lemma}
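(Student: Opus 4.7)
The plan is to prove the lemma by case analysis on whether the leader $P_{\mathcal{J}}$ already holds a certificate at the start of iteration $\mathcal{J}$. Before the case split, I would note that $\mathcal{J}$ is well-defined: the for loop performs $t_o + 1$ iterations with distinct leaders $P_1, \ldots, P_{t_o+1}$, and since $f \leq t_o$, at least one of them must be correct.

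In the first case, assume that at the start of iteration $\mathcal{J}$ we have $\mathit{certificate\_acquired}_{\mathcal{J}} = \mathit{false}$. Then the guard at line~\ref{line:check_certificate_acquired_1} is satisfied for the leader, so $P_{\mathcal{J}}$ broadcasts $\langle \textsc{aid\_req} \rangle$ at line~\ref{line:bcast_aid_req}. By synchrony, every correct process receives this message and, at round~2, invokes $\mathsf{certificate\_creation}(\mathit{start\_value}_i, P_{\mathcal{J}})$ at line~\ref{line:start_Bucket_instance}. With all correct processes participating, a correct leader, and $f \leq t_o$, the three preconditions of \Cref{lemma:iteration_certificate} are met, so every correct process obtains a valid certificate inside this invocation. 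Line~\ref{line:certificate_creation_return} then copies the output into $\mathit{certificate}_i$, and the round-6 block sets $\mathit{certificate\_acquired}_i$ and fires $\mathsf{acquire}(\cdot)$.

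In the second case, assume $P_{\mathcal{J}}$ already holds a valid certificate carried over from an earlier iteration. Then $P_{\mathcal{J}}$ skips the round-1 send and therefore broadcasts no $\textsc{aid\_req}$; since authenticated links prevent any adversary from spoofing $\textsc{aid\_req}$ as originating from $P_{\mathcal{J}}$, no correct process invokes $\mathsf{certificate\_creation}$ in $\mathcal{J}$. Every correct $P_i$ without a certificate nevertheless sends $\langle \textsc{aid\_req} \rangle$ to $P_{\mathcal{J}}$ at line~\ref{line:send_aid_req}. At round~2 the leader's first branch is empty, so the else-if branch at line~\ref{line:send_aid_reply} fires for each such $P_k$, and $P_{\mathcal{J}}$ replies with its valid $(\mathit{certified\_value}_{\mathcal{J}}, \mathit{certificate}_{\mathcal{J}})$. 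In round~3, each correct requester accepts the reply at line~\ref{line:certificate_acquired_aid_reply}, while correct processes that already held a certificate keep it because the assignment at line~\ref{line:certificate_creation_return} is guarded by ``if previously invoked'' and is therefore a no-op for them. Hence, by the end of $\mathcal{J}$, every correct process holds a valid certificate.

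The main obstacle I foresee lies in the first case: a correct process that already held a certificate from some earlier iteration is still forced to invoke $\mathsf{certificate\_creation}$ in $\mathcal{J}$ (the round-2 trigger does not consult $\mathit{certificate\_acquired}_i$), and line~\ref{line:certificate_creation_return} will then overwrite its stored certificate with the new output. This is safe only because \Cref{lemma:iteration_certificate} guarantees that the new output is itself a valid certificate, so the invariant ``every correct process holds a valid certificate'' is preserved rather than broken. Making this overwriting step explicit, and ruling out a $(\bot,\bot)$ return for a correct process, is the delicate bookkeeping step of the argument.
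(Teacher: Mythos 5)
Your proof is correct and follows the same route as the paper: a case split on whether the leader $P_{\mathcal{J}}$ already holds a certificate at the start of $\mathcal{J}$, invoking \Cref{lemma:iteration_certificate} in the first case and the \textsc{aid\_req}/\textsc{aid\_reply} help mechanism in the second. The paper's own proof is far terser; your extra bookkeeping (the forced re-invocation of $\mathsf{certificate\_creation}$ by processes that already hold certificates, and why the round-6 overwrite is harmless) is a faithful elaboration, not a departure.
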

\begin{proof}
If $P_{\mathcal{J}}$ has not previously obtained a certificate, every correct process starts (in $\mathcal{J}$) the $\mathsf{certificate\_creation}$ subprotocol (line~\ref{line:start_Bucket_instance}), and every correct processes obtains a certificate (by \Cref{lemma:iteration_certificate}).
Otherwise, $P_{\mathcal{J}}$ ``helps'' every correct process which has not obtained a certificate by sending $P_{\mathcal{J}}$'s certificate via an \textsc{aid\_reply} message (line~\ref{line:send_aid_reply}).
Thus, the statement of the lemma is satisfied even in this case.  
\end{proof}

We are ready to prove \cert's liveness.

\begin{lemma} \label{lemma:certification_liveness}
\cert (\cref{algorithm:certificate_creation,algorithm:bucket_plus}) satisfies liveness.
\end{lemma}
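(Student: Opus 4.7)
The plan is to use a simple pigeonhole argument on the $t_o + 1$ iterations of the for loop in \Cref{algorithm:bucket_plus}, combined with \Cref{lemma:liveness_helper}. Since we are assuming $f \leq t_o$ and there are $t_o + 1$ distinct leaders $P_1, P_2, \ldots, P_{t_o + 1}$ (one per iteration), at least one of these leaders must be correct. Let $\mathcal{J}$ denote the \emph{first} iteration whose leader $P_{\mathcal{J}}$ is correct.

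First I would invoke \Cref{lemma:liveness_helper} directly on $\mathcal{J}$: under the hypothesis $f \leq t_o$, every correct process obtains a certificate in iteration $\mathcal{J}$. Concretely, either $P_{\mathcal{J}}$ already holds a certificate from an earlier iteration (in which case it replies with an \textsc{aid\_reply} to every correct process that has not yet acquired one, at line~\ref{line:send_aid_reply}), or it broadcasts \textsc{aid\_req} at line~\ref{line:bcast_aid_req}, causing every correct process to enter the $\mathsf{certificate\_creation}$ subprotocol at line~\ref{line:start_Bucket_instance} and, by \Cref{lemma:iteration_certificate}, obtain a valid certificate by round~6 of $\mathcal{J}$.

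Then I would observe that once a correct process stores a valid certificate in $\mathit{certificate}_i$ at line~\ref{line:certificate_creation_return} (or line~\ref{line:certificate_acquired_aid_reply}), it triggers $\mathsf{acquire}(\mathit{certified\_value}_i, \mathit{certificate}_i)$ in that same round~6. Validity of the returned certificate follows from the explicit $\mathsf{validate}(\cdot, \cdot)$ check at line~\ref{line:received_certificate} of \Cref{algorithm:certificate_creation} and at line~\ref{line:received_aid_reply} of \Cref{algorithm:bucket_plus}, so $\mathsf{validate}(\mathit{certified\_value}_i, \mathit{certificate}_i) = \mathit{true}$ as required by the liveness specification.

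I do not anticipate any real obstacle here — the two preceding lemmas have already done the heavy lifting. The only subtlety worth spelling out is that the pigeonhole step really does need the strict count: with $t_o + 1$ leaders and at most $t_o$ faulty processes, at least one leader slot is occupied by a correct process, so $\mathcal{J}$ is well-defined and the conclusion follows.
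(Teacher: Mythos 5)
Your proof is correct and follows essentially the same route as the paper, which simply states that liveness ``follows directly from \Cref{lemma:liveness_helper}.'' The one thing you add --- the pigeonhole observation that $t_o + 1$ distinct leaders and $f \leq t_o$ faults guarantee the existence of an iteration with a correct leader --- is a detail the paper leaves implicit, and spelling it out is a welcome (and necessary) completion of the argument rather than a deviation from it.
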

\begin{proof}
Follows directly from \Cref{lemma:liveness_helper}.
\end{proof}

Finally, we are ready to prove that \cert is correct.

\begin{theorem}
\cert (\cref{algorithm:certificate_creation,algorithm:bucket_plus}) is correct.
\end{theorem}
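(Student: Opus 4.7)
The plan is to assemble the theorem by checking each of the three properties of the certification problem in turn, leveraging the lemmas already established and a short direct argument for termination.

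First, I would observe that safety is immediate from \Cref{lemma:certification_safety}: if all correct processes invoke $\mathsf{start}(v)$ with the same value, no positive or negative certificate for any $v' \neq v$ can be assembled, because any such certificate would require a $(t+1)$-combined threshold signature to which at least one correct process contributed a partial signature (either a \textsc{disclose} for $v'$ at line~\ref{line:send_disclose_leader} of \Cref{algorithm:certificate_creation}, or a group signature at line~\ref{line:partial_signature_partition}), contradicting the assumption. Likewise, liveness follows from \Cref{lemma:certification_liveness}: when $f \le t_o$, among the $t_o + 1$ leaders rotated through the \textbf{for} loop of \Cref{algorithm:bucket_plus}, at least one is correct, and \Cref{lemma:liveness_helper} guarantees every correct process acquires a certificate in the first iteration with a correct leader.

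The only remaining obligation is the termination property, namely that all correct processes stop certification simultaneously. The plan here is structural: \cert consists of a \textbf{for} loop with exactly $t_o + 1$ iterations (a value depending only on the model parameters, not on the execution), each iteration consisting of a fixed schedule of rounds, after which the $\mathsf{stop}$ indication is triggered at line~\ref{line:bucket_stop}. Because (i) the interface stipulates that all correct processes invoke $\mathsf{start}(\cdot)$ in the same round, (ii) every branch of each iteration proceeds in lockstep with the synchronous round structure, and (iii) the invocation of the $\mathsf{certificate\_creation}$ subprotocol at line~\ref{line:start_Bucket_instance} itself takes a fixed number of rounds ending by round $6$ of the iteration, every correct process executes the same total number of rounds before reaching line~\ref{line:bucket_stop}. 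Therefore all correct processes trigger $\mathsf{stop}$ in the same round.

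I do not expect any real obstacle here since the heavy lifting has been done in the preceding lemmas. The most delicate point is simply to be explicit that the $\mathsf{certificate\_creation}$ subprotocol, even when early branches of its logic do not fire (e.g., a faulty leader does not broadcast \textsc{partition\_req} or \textsc{certificate}), still ``consumes'' the same fixed number of rounds on every correct process, so no correct process can fall out of sync with the others. Combining the three properties yields the theorem.
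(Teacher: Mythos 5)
Your proposal is correct and follows essentially the same route as the paper: safety from \Cref{lemma:certification_safety}, liveness from \Cref{lemma:certification_liveness} (via \Cref{lemma:liveness_helper}), and termination from the fixed, execution-independent round structure of the $t_o+1$ iterations. Your termination argument is merely a more explicit version of the paper's one-line observation that both algorithms run for a finite (and, in fact, fixed) number of rounds.
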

\begin{proof}
Safety of \cert follows directly from \Cref{lemma:certification_safety}, whereas its liveness follows from \Cref{lemma:certification_liveness}.
Finally, termination follows the fact that both \cref{algorithm:certificate_creation,algorithm:bucket_plus} terminate within a finite number of rounds.
\end{proof}

\paragraph{Proof of complexity.}
First, we prove that a correct process is (always) able to partition $\leq n - t_o$ values into $O(1)$ groups such that (1) each groups contains at most $t$ values, and (2) any two adjacent groups together contain at least $t + 1$ values.

\begin{lemma} \label{lemma:groups}
Consider a correct process $L$ which has received $x$ values, with $n - t_o \leq x \leq n$.
$L$ is able to partition all the received values into $O(1)$ groups such that (1) each group contains at most $t$ received values, and (2) any two adjacent groups contain at least $t + 1$ received values.
\end{lemma}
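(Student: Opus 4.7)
The plan is to partition the sorted multiset of received values into adjacent intervals via a simple greedy procedure, and then bound the number of intervals using $x \leq n$.

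First, I would exploit the fact that $L$ reaches line~\ref{line:partitioning} only when the check at line~\ref{line:t+1_disclose_check} has failed, so no value appears among the received \textsc{disclose} messages with multiplicity strictly greater than $t$. Let $v_1 < v_2 < \cdots < v_m$ be the distinct received values with respective multiplicities $c_1, c_2, \ldots, c_m$, each satisfying $c_j \leq t$ and $\sum_j c_j = x$.

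Next, I would build the partition greedily: process the $v_j$ in increasing order, accumulating them into the current group as long as the running count stays $\leq t$; as soon as adding the next distinct value $v_{j^*}$ would push the count above $t$, close the current group with right endpoint $v_{j^*}$ and open a new group beginning at $v_{j^*}$. Property (1) holds by construction, since the running count is never allowed to exceed $t$ and no multiplicity class is ever split (because $c_j \leq t$). Property (2) follows from the closing rule itself: whenever a group $g_i$ is closed, its count plus the first multiplicity of $g_{i+1}$ exceeds $t$, and hence $|g_i| + |g_{i+1}| \geq t+1$ for every adjacent pair.

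Finally, for the $O(1)$ bound on the number $k$ of groups, I would sum the adjacency inequalities over $i = 1, \ldots, k-1$ to obtain
$(k-1)(t+1) \leq \sum_{i=1}^{k-1} \bigl(|g_i| + |g_{i+1}|\bigr) = 2\sum_{i=1}^{k} |g_i| - |g_1| - |g_k| \leq 2x \leq 2n$. Since $n = (2+c)t + 1$ with $c$ a fixed constant, this yields $k \leq 2(2+c) + O(1/t) \in O(1)$. The subtlety I expect to require the most care is reconciling this counting-based partition with the rigid boundary convention of the negative-certificate format, which forces $x_1 = v_{\mathit{min}}$, $y_k = v_{\mathit{max}}$, and $y_{i-1} = x_i$ throughout; pinning the leftmost $x_1$ to $v_{\mathit{min}}$ and the rightmost $y_k$ to $v_{\mathit{max}}$ only extends intervals at the extremes and does not alter any count, so neither property nor the bound on $k$ is affected.
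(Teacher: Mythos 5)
Your proof is correct and, if anything, more complete than the paper's. The paper's entire argument is the counting step: it pairs adjacent groups --- groups 1 and 2 jointly hold at least $t+1$ values, as do groups 3 and 4 --- and concludes from $x \leq n \leq 3t+1$ that fewer than $t$ values remain for a fifth group, so no sixth can exist; the construction of the partition itself is delegated to the PoE protocol of Rambaud \emph{et al.}~\cite{rambaud2022linear}. You instead sum all the adjacency inequalities to obtain $(k-1)(t+1) \leq 2x \leq 2n$, which gives a marginally weaker constant (at most $6$ groups rather than $5$) but the same $O(1)$ conclusion. What your write-up adds, and the paper omits, is the explicit greedy construction together with the two observations that make it go through: every multiplicity is at most $t$ because the check at line~\ref{line:t+1_disclose_check} has failed (without this, property (1) could be unachievable, since an interval group cannot split the copies of a single value), and pinning $x_1 = v_{\mathit{min}}$ and $y_k = v_{\mathit{max}}$ to match the negative-certificate format only stretches the extreme intervals without changing any count. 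Both versions are sound; yours is self-contained where the paper leans on the cited reference.
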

\begin{proof}
To prove the lemma, we reuse the argument from~\cite{rambaud2022linear}.
As any two adjacent groups contain (collectively) at least $t + 1$ values, and there are at most $3t + 1$ values, there can be at most $5 \in O(1)$ groups.
Indeed, the first two groups contain (at least) $t + 1$ values.
The same holds for the groups $3$ and $4$.
Hence, there are $3t + 1 - 2(t + 1) < t$ values left for group $5$, which concludes the lemma.
\end{proof}

A direct consequence of \Cref{lemma:groups} is that any bucket certificate has $O(1)$ words.

\begin{lemma} \label{lemma:certificate_words}
Any (positive or negative) bucket certificate has $O(1)$ words.
\end{lemma}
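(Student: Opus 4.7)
The plan is to handle the two certificate types separately and invoke the structural result from \Cref{lemma:groups} for the nontrivial case. Since a word is defined to contain a constant number of signatures and values, it suffices to show that both positive and negative certificates are built from a constant number of values and threshold signatures.

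For a positive certificate, the definition directly gives that it is a single $(t+1)$-combined threshold signature on a single value $v$. That is $O(1)$ signatures together with $O(1)$ values, hence $O(1)$ words, and no further argument is needed.

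For a negative certificate, the definition gives a tuple of $k$ groups $g_j = (x_j, y_j, \mathit{tsignature}_j)$, so the total size is $O(k)$ words (each group contributes two values and one threshold signature). The task therefore reduces to bounding $k$. I would observe that a negative certificate is only ever produced by the leader at line~\ref{line:compute_negative_certificate} of \Cref{algorithm:certificate_creation} from the groups computed by the $\mathsf{partition}$ call at line~\ref{line:partitioning}, which runs on the set of received \textsc{disclose} messages satisfying $|\mathit{discloses}| \geq n - t_o$. Since $n-t_o \leq |\mathit{discloses}| \leq n$, \Cref{lemma:groups} applies and bounds the number of groups by a constant (in fact by $5$), so $k \in O(1)$ and the certificate is $O(1)$ words.

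The main subtlety I anticipate is making sure the bound on $k$ applies to \emph{every} negative certificate that any party (including a faulty one) could ever obtain, not merely to those produced by correct leaders. This is handled by the validation: a process only treats a negative certificate as valid when its format matches the definition (including the chaining $x_1=v_{\mathit{min}}$, $y_k = v_{\mathit{max}}$, $y_{j-1}=x_j$) and every embedded $\mathit{tsignature}_j$ is a verifiable $(t+1)$-combined threshold signature of $(x_j, y_j)$. As long as $\mathsf{validate}$ (when deemed to return \textit{true}) additionally checks that $k$ is at most the constant from \Cref{lemma:groups} (i.e., the check $|\mathit{groups}| \in O(1)$ used at line~\ref{line:received_certificate}'s analogue in \Cref{algorithm:certificate_creation}), the bound is inherited by every accepted negative certificate. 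With this in place, combining the two cases yields that every bucket certificate consists of $O(1)$ words, completing the proof.
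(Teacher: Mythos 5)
Your proof is correct and follows essentially the same route as the paper's: a positive certificate is a single threshold signature, and a negative certificate is $O(1)$ groups by \Cref{lemma:groups}, each contributing $O(1)$ values and signatures. Your added caveat---that the $O(1)$ bound on the number of groups must also be enforced by $\mathsf{validate}$ so that adversarially assembled negative certificates cannot be larger---is a genuine subtlety the paper's one-line proof glosses over, and it is worth keeping.
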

\begin{proof}
A positive certificate has $O(1)$ words as it is a threshold signature, whereas a negative certificate has $O(1)$ words as it is a set of $O(1)$ (by \Cref{lemma:groups}) threshold signatures, one for each formed group.
\end{proof}

Next, we prove that only $O(n)$ words are exchanged by all correct processes in any iteration of the $\mathsf{certificate\_creation}$ subprotocol.

\begin{lemma} \label{lemma:words_in_iteration}
Consider any specific iteration $\mathcal{I}$ of the $\mathsf{certificate\_creation}$ subprotocol (\Cref{algorithm:certificate_creation}).
All correct processes collectively send $O(n)$ words in $\mathcal{I}$.
\end{lemma}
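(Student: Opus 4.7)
The plan is to walk through each round of $\mathsf{certificate\_creation}$ and bound, for correct senders only, the number of words transmitted. The crucial observation is that in each round the pattern is either many-to-one (to the leader) with $O(1)$-sized messages, or one-to-many (broadcast by the leader) which, even when the leader is correct, only costs $O(n)$ words provided each individual message is $O(1)$ words. The main obstacle is showing that every broadcast $L$ performs carries only $O(1)$ words of payload; the harder case is the $\textsc{partition\_req}$ and the negative certificate, but both reduce to the $O(1)$-groups bound already proved.

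First, I would handle Round 1: every correct process sends exactly one $\textsc{disclose}$ message to $L$, consisting of a value and a partial signature, so $O(1)$ words each and $O(n)$ words in total across correct senders. Next, for Round 2, only $L$ sends, and only if $L$ is correct does it contribute to the count. If $L$ forms a positive certificate it broadcasts a single $\textsc{certificate}$ message containing one value and one threshold signature, $O(1)$ words per recipient, so $O(n)$ words in total. Otherwise $L$ broadcasts $\textsc{partition\_req}$ containing $\mathit{groups}_i$; by \Cref{lemma:groups}, $|\mathit{groups}_i| \in O(1)$, and each group is a triple with one threshold signature, so the payload is $O(1)$ words and the broadcast is again $O(n)$ words.

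For Round 3, each correct process sends one $\textsc{partition\_reply}$ to $L$, containing at most $|\mathit{groups}_i| \in O(1)$ pairs of (group, partial signature), hence $O(1)$ words per message and $O(n)$ words collectively. For Round 4, only $L$ may act; if $L$ is correct and succeeds in building a negative certificate, it broadcasts one $\textsc{certificate}$ message whose payload, by \Cref{lemma:certificate_words}, is $O(1)$ words, giving an $O(n)$-word broadcast. Round 5 contains no sends.

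Summing the per-round contributions from correct processes yields $O(n)$ words per iteration, which is exactly the statement. Note that a Byzantine leader may misbehave arbitrarily, but its own sends are not counted; and correct non-leader processes are only ever instructed to send to $L$, so their contribution is always at most one $O(1)$-word message per round, independent of $L$'s behavior. This observation closes the argument without any case analysis on whether $L$ is correct.
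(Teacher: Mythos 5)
Your proof is correct and follows essentially the same route as the paper's: bound the leader's broadcasts at $O(n)$ words by showing each payload is $O(1)$ words (via \Cref{lemma:groups} and \Cref{lemma:certificate_words}), and bound each correct non-leader at $O(1)$ words per round since it only ever sends constant-sized messages to $L$. Your round-by-round breakdown and the remark that a Byzantine leader's own sends are not counted are just more explicit versions of the paper's one-paragraph argument.
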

\begin{proof}
If the leader is correct, it sends $O(n)$ words in $\mathcal{I}$.
Note that each \textsc{partition\_req} message contains $O(1)$ words as $O(1)$ groups are formed (by \Cref{lemma:groups}).
Moreover, each \textsc{certificate} message contains $O(1)$ words as any bucket certificate has $O(1)$ words (by \Cref{lemma:certificate_words}).
Furthermore, each non-leader process sends only $O(1)$ words in $\mathcal{I}$.
Thus, all correct processes collectively send $O(n) + n \cdot O(1) = O(n)$ words in $\mathcal{I}$.
\end{proof}

Next, we show that all correct processes collectively send $O(n)$ words in each iteration of the for loop in \Cref{algorithm:bucket_plus}.

\begin{lemma} \label{lemma:words_in_bigger_iteration}
Consider any specific iteration $\mathcal{J}$ of the for loop in \Cref{algorithm:bucket_plus}.
All correct processes collectively send $O(n)$ words in $\mathcal{J}$.
\end{lemma}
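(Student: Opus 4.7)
The plan is to account, round by round, for every word that a correct process injects into iteration $\mathcal{J}$ of the for loop, and to separate the cost into (a) the "outer" messages listed explicitly in \Cref{algorithm:bucket_plus} (\textsc{aid\_req} and \textsc{aid\_reply}), and (b) the "inner" cost of at most one nested invocation of $\mathsf{certificate\_creation}$, for which we can reuse \Cref{lemma:words_in_iteration} as a black box.

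First I would bound the outer traffic. In Round 1 of $\mathcal{J}$, each correct process executes at most one of the two branches at \Cref{line:bcast_aid_req,line:send_aid_req}: a non-leader correct process sends a single $O(1)$-word \textsc{aid\_req} to $P_j$, and the leader (if correct and not yet certified) broadcasts a single $O(1)$-word \textsc{aid\_req}. Summed over all correct processes this is $O(n)$ words. In Round 2, the only outer message is the \textsc{aid\_reply} at \Cref{line:send_aid_reply}, which is sent only by the leader and at most once per requesting process $P_k$; by \Cref{lemma:certificate_words} the attached certificate has $O(1)$ words, so the leader contributes at most $O(n)$ words in total. Rounds 3 and 6 of \Cref{algorithm:bucket_plus} are purely local — no correct process sends a message there — so no additional outer words are incurred.

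Next I would bound the inner traffic. The only way iteration $\mathcal{J}$ can invoke a second protocol is through $\mathsf{certificate\_creation}$ at \Cref{line:start_Bucket_instance}, and a correct process enters that call at most once in $\mathcal{J}$ (guarded by receipt of a single \textsc{aid\_req} from $P_j$). Whether or not a Byzantine leader actually triggers all correct processes to invoke the subprotocol, the set of correct processes that do invoke it all participate in the same iteration $\mathcal{I}$ of $\mathsf{certificate\_creation}$; by \Cref{lemma:words_in_iteration}, they collectively send $O(n)$ words in $\mathcal{I}$.

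Adding the two contributions yields $O(n) + O(n) = O(n)$ words sent by correct processes in iteration $\mathcal{J}$, which is exactly the claim. The only subtlety worth flagging — and what I would expect to be the sole pitfall — is confirming that an adversarial leader cannot inflate the count, for instance by selectively delivering \textsc{aid\_req} to different subsets: this is harmless because each correct recipient still invokes $\mathsf{certificate\_creation}$ at most once, keeping us within a single iteration $\mathcal{I}$ to which \Cref{lemma:words_in_iteration} already applies, and each correct process still sends at most a constant number of outer messages regardless of the leader's behaviour.
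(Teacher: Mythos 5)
Your proof is correct and follows essentially the same route as the paper's: bound the \textsc{aid\_req}/\textsc{aid\_reply} traffic at $O(n)$ words using the $O(1)$ bound on certificate size, and charge the nested $\mathsf{certificate\_creation}$ call to \Cref{lemma:words_in_iteration}. Your version is merely more explicit about the round-by-round accounting and the behaviour under a Byzantine leader.
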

\begin{proof}
If the leader of $\mathcal{J}$ (process $P_{\mathcal{J}}$) is correct, it sends $O(n)$ words as each \textsc{aid\_reply} message contains $O(1)$ words (by \Cref{lemma:certificate_words}).
Other processes send $O(1)$ words.
Finally, given that all correct processes send $O(n)$ words in the corresponding iteration of the $\mathsf{certificate\_creation}$ subprotocol (by \Cref{lemma:words_in_iteration}), the lemma holds.
\end{proof}

Next, we prove that $O(f)$ words are sent in any iteration of the for loop such that (1) its leader is correct, and (2) all correct processes have previously acquired a certificate.

\begin{lemma} \label{lemma:silence}
Consider any specific iteration $\mathcal{J}$ of the for loop in \Cref{algorithm:bucket_plus} such that (1) the leader ($P_{\mathcal{J}}$) is correct, and (2) all correct processes have acquired a certificate prior to $\mathcal{J}$.
All correct processes send $O(f)$ words in $\mathcal{J}$.
\end{lemma}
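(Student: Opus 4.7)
The plan is to trace through \Cref{algorithm:bucket_plus} round by round under the two hypotheses of the lemma and show that, in iteration $\mathcal{J}$, essentially no correct process does anything except possibly the leader replying to faulty stragglers. Since every correct process has $\mathit{certificate\_acquired}_i = \mathit{true}$ at the start of $\mathcal{J}$, the guard on line~\ref{line:check_certificate_acquired_1} fails for every correct process, so no correct process (including the leader $P_{\mathcal{J}}$) sends a $\langle \textsc{aid\_req} \rangle$ in round~1 (neither via line~\ref{line:bcast_aid_req} nor via line~\ref{line:send_aid_req}). In particular, no correct process ever receives $\langle \textsc{aid\_req} \rangle$ from $P_{\mathcal{J}}$, since $P_{\mathcal{J}}$ is correct and did not send one.

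Next I would argue that no correct process invokes $\mathsf{certificate\_creation}$ in $\mathcal{J}$: the invocation at line~\ref{line:start_Bucket_instance} is gated on receiving $\langle \textsc{aid\_req} \rangle$ \emph{from $P_{\mathcal{J}}$}, which by the previous paragraph cannot happen. Hence the entire \Cref{algorithm:certificate_creation} is not executed by any correct process in this iteration, contributing zero words. Similarly, the update in round~3 (line~\ref{line:received_aid_reply}) sends no messages, and round~6 only collects local state.

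The only remaining source of correct-process traffic is the \textsc{aid\_reply} at line~\ref{line:send_aid_reply}: if $P_{\mathcal{J}}$ receives $\langle \textsc{aid\_req} \rangle$ from some process $P_k$, it replies. Since no \emph{correct} process sends $\langle \textsc{aid\_req} \rangle$ in $\mathcal{J}$, any such $P_k$ must be faulty, giving at most $f$ replies. By \Cref{lemma:certificate_words}, each \textsc{aid\_reply} carries a certificate of $O(1)$ words, so the leader sends $O(f)$ words total. All other correct processes send nothing in $\mathcal{J}$, yielding the claimed $O(f)$ bound.

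I do not foresee a hard technical obstacle here; the argument is a careful case analysis on the guards in \Cref{algorithm:bucket_plus}. The one spot that needs a little care is ensuring that the bound on \textsc{aid\_reply} messages really is $f$ and not $n$: this uses the fact that the leader only replies to the \emph{senders} of $\langle \textsc{aid\_req} \rangle$, combined with the observation (from hypothesis~(2) and the round-1 guard) that no correct process is such a sender. Once that point is made explicit, the rest is immediate from \Cref{lemma:certificate_words}.
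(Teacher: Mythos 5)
Your proof is correct and follows essentially the same route as the paper's: hypothesis~(2) kills the round-1 guard so no correct process sends \textsc{aid\_req}, hence no correct process invokes $\mathsf{certificate\_creation}$, and the only traffic is the leader's \textsc{aid\_reply} to at most $f$ faulty requesters, each of $O(1)$ words by \Cref{lemma:certificate_words}. You are in fact slightly more explicit than the paper about why every \textsc{aid\_req} sender must be faulty, which is the one point worth spelling out.
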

\begin{proof}
No correct process invokes the $\mathsf{certificate\_creation}$ subprotocol as $P_{\mathcal{J}}$ never sends an \textsc{aid\_req} message.
Hence, only $P_{\mathcal{J}}$ (out of the correct processes) sends messages in $\mathcal{J}$.
Namely, $P_{\mathcal{J}}$ replies to each received $\textsc{aid\_req}$ message.
As there can only be $O(f)$ such messages (each from a faulty process) and any certificate contains $O(1)$ words (by \Cref{lemma:certificate_words}), $P_{\mathcal{J}}$ sends $O(f)$ words, which proves the lemma.
\end{proof}

Lastly, we are ready to prove that \cert exchanges $O(n \cdot f)$ words.

\begin{theorem}
\cert (\cref{algorithm:certificate_creation,algorithm:bucket_plus}) exchanges $O(n \cdot f)$ words.
\end{theorem}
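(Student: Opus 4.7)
The plan is to split into two cases according to whether the actual fault count $f$ exceeds the optimistic threshold $t_o$ or not, and to bound the total cost by summing across the $t_o + 1$ iterations of the for loop in \Cref{algorithm:bucket_plus}.

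First I would dispatch the easy case $f > t_o$. Here I would simply apply \Cref{lemma:words_in_bigger_iteration} to each of the $t_o + 1$ iterations, obtaining a total of $(t_o + 1) \cdot O(n) = O(n \cdot t_o)$ words. Because $t_o = \lfloor c \cdot t \rfloor \in \Theta(t) = \Theta(n)$, the assumption $f > t_o$ forces $f \in \Omega(n)$, so $O(n \cdot t_o) \subseteq O(n^2) \subseteq O(n \cdot f)$ and the case closes.

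The substantive case is $f \leq t_o$. The key observation is that among the $t_o + 1$ leaders, at most $f$ can be faulty, and by \Cref{lemma:liveness_helper} the first iteration whose leader is correct causes every correct process to acquire a certificate. I would therefore partition the iterations into three buckets: (i) the iterations that precede the first correct-leader iteration, all of which have faulty leaders and whose number is thus at most $f$; (ii) the first correct-leader iteration itself; and (iii) all remaining iterations, which occur after every correct process already holds a certificate. For bucket (i), each iteration costs $O(n)$ words by \Cref{lemma:words_in_bigger_iteration}, giving $O(n \cdot f)$ in total. For bucket (ii), the single iteration costs $O(n)$ words, again by \Cref{lemma:words_in_bigger_iteration}. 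For bucket (iii) I would distinguish by the leader's type: correct-leader iterations each cost $O(f)$ words by \Cref{lemma:silence}, summing to $O(t_o \cdot f) = O(n \cdot f)$, while faulty-leader iterations (of which there are still at most $f$ in total across buckets (i) and (iii)) each cost $O(n)$ by \Cref{lemma:words_in_bigger_iteration}, again summing to $O(n \cdot f)$.

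The main obstacle is purely accounting-based rather than conceptual: one must be careful that \Cref{lemma:silence} only applies when the leader is correct (a faulty leader could still send \textsc{aid\_req} and trigger $\mathsf{certificate\_creation}$ invocations even among certificate-holding correct processes), so faulty-leader iterations cannot be assumed cheap and must be charged $O(n)$ each, which is exactly why the overall bound $f$ on the number of such iterations is essential. Collecting the three buckets yields $O(n \cdot f)$ in the case $f \leq t_o$, matching the bound obtained in the other case and concluding the theorem.
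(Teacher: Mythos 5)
Your proposal is correct and follows essentially the same route as the paper: both arguments dispatch the case $f > t_o$ by charging every iteration $O(n)$ via \Cref{lemma:words_in_bigger_iteration}, and in the case $f \leq t_o$ both charge the (at most $f$) faulty-leader iterations $O(n)$ each, the first correct-leader iteration $O(n)$, and all subsequent correct-leader iterations $O(f)$ each via \Cref{lemma:liveness_helper} and \Cref{lemma:silence}. Your three-bucket organization is just a repackaging of the paper's split into correct-leader and faulty-leader iteration classes, and your cautionary remark about \Cref{lemma:silence} not covering faulty-leader iterations matches the paper's accounting exactly.
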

\begin{proof}
We separate all the iterations of the for loop into two disjoint classes: (1) class $\mathcal{J}^+$ with a correct leader, and (2) class $\mathcal{J}^-$ with a faulty leader.
Note that $|\mathcal{J}^+| \in O(n)$ (as $t_o \in O(n)$) and $|\mathcal{J}^-| \in O(f)$.
We separate the proof into two possible cases:
\begin{compactitem}
    \item Let $f \leq t_o$.
    All correct processes exchange $O(n)$ words in each iteration which belongs to $\mathcal{J}^-$ (by \Cref{lemma:words_in_bigger_iteration}).
    In the first iteration which belongs to $\mathcal{J}^+$, all correct processes exchange $O(n)$ words (by \Cref{lemma:words_in_bigger_iteration}).
    Moreover, all correct processes acquire a certificate in the first iteration of $\mathcal{J}^+$ (by \Cref{lemma:liveness_helper}).
    Therefore, in all other iterations of $\mathcal{J}^+$, all correct processes send $O(f)$ words (by \Cref{lemma:silence}).
    Hence, all correct processes send $f \cdot O(n) + O(n) + O(n) \cdot O(f) = O(n \cdot f)$ words.
    The statement of the lemma holds in this case.

    \item Let $f > t_o$; hence, $f \in \Theta(n)$.
    In this case, \cert exchanges $O(n) \cdot O(n)$ words as each iteration of the for loop sends $O(n)$ words (by \Cref{lemma:words_in_bigger_iteration}).
    As $f \in \Theta(n)$, the statement of the lemma holds in this case as well.
\end{compactitem}
As the lemma holds in both possible scenarios, the proof is concluded.
\end{proof}
\section{Concluding Remarks} \label{section:conclusion}

This paper presents \name, the first synchronous SBA protocol with adaptive $O(n \cdot f)$ word complexity.
\name operates among $n = \big( 2 + \Omega(1) \big)t + 1$ processes while tolerating $t$ Byzantine failures.
In the future, we aim to adapt \name to achieve both adaptive $O(n \cdot f)$ word complexity and adaptive $O(f)$ latency. 




\bibliographystyle{acm}
\bibliography{references}


\end{document}